\let\proof\@undefined
\let\endproof\@undefined
\newcommand{\argsup}{\mathop{\mathrm{argsup}}}
\newcommand{\argmax}{\mathop{\mathrm{argmax}}}
\newcommand{\var}{\mathop{Var}}
\newtheorem{theorem}{Theorem}
\newtheorem{lemma}{Lemma}
\newtheorem{conjecture}[theorem]{Conjecture}
 \def\cN{{\mathcal{N}}}
\def\argmax{\mathop{\mathrm{argmax}}}
\def\b0{{\pmb{0}}}
 \def\bb{{\mathbf{b}}} \def\bc{{\mathbf{c}}} 
 \def\bff{{\mathbf{f}}} \def\bg{{\mathbf{g}}} 
 \def\bn{{\mathbf{n}}}  
\def\bq{{\mathbf{q}}}   
   \def\bx{{\mathbf{x}}}
\def\by{{\mathbf{y}}} \def\bz{{\mathbf{z}}} \def\bth{\boldsymbol{\theta}}
\def\bA{{\mathbf{A}}}  \def\bC{{\mathbf{C}}} 
 \def\bF{{\mathbf{F}}}  
\def\bI{{\mathbf{I}}}
\begin{document}
\title{Concatenated Coding for the AWGN Channel with Noisy Feedback}
\date{}

\author{Zachary Chance, \emph{Student Member}, \emph{IEEE}, and David J. Love, \emph{Senior Member}, \emph{IEEE}\\
\thanks{
This work was presented in part at the Asilomar Conference on Signal, Systems, and Computers in November 2009 and at the International Conference on
Acoustics, Speech, and Signal Processing in March 2010.

This work was supported in part by the National Science Foundation under grant CCF0513916.  Zachary Chance, a Ph.D.
student, and David J. Love, an associate professor, are affiliated
with the School of Electrical and Computer Engineering at Purdue University, West Lafayette, IN, 47907.  Email: zchance@purdue.edu, djlove@ecn.purdue.edu.
}
}

\maketitle

\begin{abstract}
The use of open-loop coding can be easily extended to a closed-loop concatenated code if the transmitter has access to feedback.  This can be done by introducing a feedback transmission scheme as an inner code.  In this paper, this process is investigated for the case when a linear feedback scheme is implemented as an inner code and, in particular, over an additive white Gaussian noise (AWGN) channel with noisy feedback.  To begin, we look to derive an optimal linear feedback scheme by optimizing over the received signal-to-noise ratio.  From this optimization, an asymptotically optimal linear feedback scheme is produced and compared to other well-known schemes.  Then, the linear feedback scheme is implemented as an inner code to a concatenated code over the AWGN channel with noisy feedback.  This code shows improvements not only in error exponent bounds, but also in bit-error-rate and frame-error-rate.  It is also shown that if the concatenated code has total blocklength $L$ and the inner code has blocklength, $N$, the inner code blocklength should scale as $N = O\left(\frac{C}{R}\right)$, where $C$ is the capacity of the channel and $R$ is the rate of the concatenated code.  Simulations with low density parity check (LDPC) and turbo codes are provided to display practical applications and their error rate benefits.
\end{abstract}

\begin{keywords}
additive Gaussian noise channels, concatenated coding, linear feedback, noisy feedback, Schalkwijk-Kailath coding scheme
\end{keywords}

\section{Introduction}
\PARstart{T}{H}{E} field of open-loop error-correction coding has been rich with innovations over the last 10-20 years with implementation of codes like turbo codes and low density parity check (LDPC) codes.  These codes have proven that open-loop methods can be very powerful.  However, an important question to be asked is: ``Can we do better with closed-loop coding?''  In this paper, we investigate the use of closed-loop \emph{concatenated coding} (see Fig. \ref{concode}) over an additive white Gaussian noise (AWGN) channel with noisy feedback.  The benefits of this have already been shown for a noiseless feedback channel as in \cite{Cain}.

By definition, concatenated coding consists of two codes: an inner code and an outer code.  As for the outer code, we will assume it is any general forward error-correction code as to make this method applicable to any open-loop technique.  Furthermore, since we are interested in closed-loop coding, the inner code will be a feedback transmission scheme; this, however, creates the intermediate goal of designing a \emph{good} feedback scheme.  In this, we narrow our focus to the class of linear feedback transmission schemes - meaning that each transmission is a linear function of feedback side-information and the message to be sent.  With perfect feedback, this class is known to have low complexity and high reliability \cite{Schal1, Schal2}.  Therefore, we will try to exploit these advantages even with a noisy feedback channel.

\begin{figure*}
\centering
  \includegraphics[scale = 0.9]{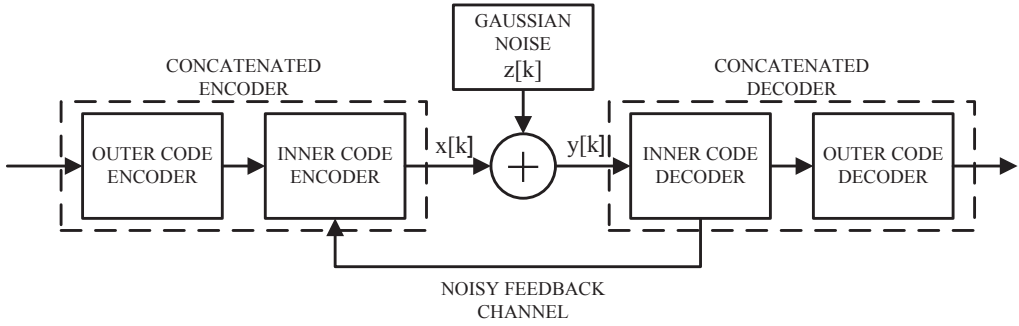}\\
  \caption{Closed-loop concatenated coding system.}\label{concode}
\end{figure*}

The search for the best linear feedback coding scheme for AWGN channels has a long history, dating back to 1956 with a paper by Elias \cite{Elias1}.  However, most early work was done in the 1960's with papers like \cite{Butman, Omura, Elias2, Turin1, Turin2, Horstein, Ferg, Kashyap}\nocite{Cover}.  In 1966, Schalkwijk and Kailath developed a specific linear coding technique that utilizes a noiseless feedback channel\cite{Schal1, Schal2}.  The coding scheme was based off of a zero-finding algorithm called the Robbins-Monro procedure which sequentially estimates the zero of a function given noisy observations.  Because of its low complexity, much work has been done extending and evaluating the performance of the Schalkwijk-Kailath (S-K) scheme in different circumstances.  The performance was examined when there is bounded noise on the feedback channel in \cite{Nuno}. In \cite{Kramer,Wyner}, the system was observed under a peak energy constraint.  A generalization of the coding scheme for first-order autoregressive noise processes on the forward channel was derived in \cite{Butman} while the problem was also looked at in \cite{Wolf1, Tier}.  The use of the coding technique was extended to applications in stochastic controls in \cite{Omura}.  It was also extended for use in stochastically degraded broadcast channels in \cite{Oz} and two-user Gaussian multiple access channels in \cite{Oz2}.  The scheme was used in \cite{Kim} for a derivation of feedback capacity for first-order moving average channels and, in general, for channels with stationary Gaussian forward noise processes in \cite{Kim4}.  In \cite{Gallager}, it was reformulated using a previous result in \cite{Elias1} and then altered for specific use with PAM signaling.  Variations on it were created by using stochastic approximation in \cite{Kumar}.  The S-K scheme was also used in a derivation of an error exponent for AWGN channels with partial feedback in \cite{Agar}.  This brief overview, of course, is not exhaustive as much more literature can be found on the subject.  In fact, due to the notable popularity of the S-K scheme, we will implement it for performance comparisons.

Recently, the area of general feedback communication schemes has been also studied as in \cite{shay,cole}.  These use a technique called \textit{Posterior Matching} in which information at the receiver is refined using the a-posteriori density function which is matched to the input distribution function.  Such techniques have also proven to be capacity-achieving and, in fact, a generalization of the S-K scheme.  However, these schemes along with the S-K scheme all rely on the presence of a noiseless feedback channel to achieve non-zero rate.  If noise is present, all of these schemes have only an achievable rate of zero.  However, coding with the presence of a noisy feedback channel with variable length techniques has been investigated in \cite{Sahai1,Sahai2}.

In this paper, we do the following:
\begin{itemize}
\item Give the basic framework of concatenated coding and design a linear feedback scheme for use as an inner code by:
\begin{itemize}
    \item Using a matrix formulation for feedback encoding, we formulate the maximum SNR optimization problem.  The formulation consists of a combining vector and noise encoding matrix.  It shares many similarities to the method employed by \cite{Butman}.  In addition, an upper bound on SNR for all noisy feedback schemes over the AWGN channel is derived and shown to be tighter than the bound previously made by \cite{Butman}.
    \item Using SNR as the cost function of interest, we solve for (i) the SNR-maximizing linear receiver given a fixed linear transmit encoding scheme and (ii) the SNR-maximizing linear transmitter given a fixed linear receiver.
    \item Using insights from the numerical optimization, we derive what we believe to be the optimal linear processing set-up.  The performance of the proposed scheme approaches the linear processing SNR upper bound as the blocklength grows large.
\end{itemize}
\item Using the proposed linear feedback scheme, we then implement a concatenated code over the AWGN channel with a general error-correction code as an outer code.  The error exponent for the concatenated scheme is derived in terms of the error exponent for the outer code.
\item Upper and lower bounds on the feedback error exponent are then derived using this setup.  These bounds are then used to illustrate the effect of using the proposed linear feedback scheme as an inner code.  An approximate trade-off between inner code blocklength and total code blocklength is also derived.
\item Simulations are run to show advantages in bit-error-rate (BER) and frame-error-rate (FER) when the outer code is either a turbo code or LDPC code.
\end{itemize}

The paper is organized in the following manner.  The overall system and the framework for a general closed-loop concatenated coding scheme are introduced in Section II.  In Section III, the concept of linear feedback coding is introduced to develop an appropriate inner code for the overall concatenated scheme.  Two methods of optimization for a general linear coding scheme are also briefly introduced.  Using these optimization methods, a linear feedback scheme is proposed in Section IV.  This section also consists of analyzing the asymptotic performance of our scheme, along with deriving alternate proofs of results from related papers.  In Section V, the proposed linear feedback scheme is compared against the S-K scheme to illustrate gains in performance.  Section VI introduces the concatenated coding scheme and its error rate analyses.  Simulations are then given in Section VII to demonstrate practical concatenated code performance with turbo codes and LDPC codes.

\section{General Closed-Loop Concatenated Coding}
In this section, we formulate the general framework for a closed-loop concatenated coding scheme; to begin, we look specifically at the AWGN channel.  At each channel use $k = 1,2,\ldots,L$, the transmitted signal, $x[k] \in \mathbb{R}$, is sent across the channel. Likewise, the receiver obtains
\begin{equation}
y[k] = x[k] + z[k],
\label{channel}
\end{equation}
\noindent where, for our purposes, we will assume $\{z[k]\} \in \mathbb{R}$ are i.i.d. such that $z[k]\sim\cN(0,1)$.  Also, to remain practical, we can impose an average transmit power constraint, $\rho$, such that
\begin{equation}
E[\bx^T\bx] \leq L\rho,
\label{allpower}
\end{equation}
\noindent where $\bx = \left[x[1],x[2],\ldots,x[L]\right]^T.$

Consider sending a length $K$ open-loop code across the AWGN channel with noisy feedback.  The transmission of each component of the open-loop codeword, $\bc \in \mathbb{R}^{K}$, will be encoded using an inner code of blocklength $N$ that has access to noisy feedback.  Thus, the total concatenated codeword and accordingly, the transmit vector, $\bx$, has length $L = KN$.  Note that the open-loop codeword is composed of entries that lie on the real line; this implies that if the outer code is binary, a modulation operation is implicit.  Now, if we write the components of the inner code as $s_{i}(c_{j})$ (the $i$-th inner code component used to encode the $j$-th outer code component), then $\bx$ can alternatively be written as
\[
\bx = \left[s_{1}(c_{1}),s_{2}(c_{1}),\ldots,s_{N}(c_{1}),s_{1}(c_{2}),\ldots,s_{N}(c_{K})\right].
\]
\noindent This encoding process now can be grouped by the concatenated encoder (or \textit{superencoder}) in Fig. \ref{concode}.  At this point, we also bound the average power of the outer codeword as
\begin{equation}
E[\bc^T\bc] \leq K\rho.\label{cpower}
\end{equation}
The codeword power constraint (\ref{cpower}) will be useful when we analyze the performance of the concatenated scheme in Section VI.  After all transmissions have been made, the inner decoder creates an estimate of the current outer code codeword by processing $\by$, $N$ entries at a time.  This process produces the following total codeword estimate
\begin{equation}
\hat{\bc} = \left[\hat{c}_{1},\hat{c}_{2},\ldots,\hat{c}_{K}\right],
\end{equation}
which will be passed on to the outer decoder for final decoding.

This setup now allows for a very convenient simplification of the concatenated coding scheme.  After processing at the inner code decoder (which will be described in Section IV), the channel given in (\ref{channel}) can be seen alternatively as
\begin{equation}
\tilde{y}[i] = c_{i} + \tilde{z}[i],
\end{equation}
\noindent as seen in Fig. \ref{concode2} where the time index $i$ is related to the original channel use index as $i = kN$.  The modified noise component, $\tilde{z}[i]$, has a new variance dependent on the properties of the inner code.  In fact, the whole effect of the inner code is encapsulated in the modified noise, $\tilde{z}[i]$.  Due to the inner code being undefined at this point, we cannot go into more depth.  However, a per-component signal-to-ratio can be calculated as
\begin{equation}
SNR_{c_{i}} = \frac{E[c_{i}^2]}{E[(\tilde{z}[i])^2]} = \frac{\rho}{E[(\tilde{z}[i])^2]}.
\end{equation}
Since the noise is i.i.d., this is the same for all $c_{i}$.  This implies that $Var(\tilde{z}[i]) = \frac{\rho}{SNR_{c_{i}}}$.

This technique has converted the closed-loop problem now into an open-loop problem.  Note that this simplification is the exact same process as defining the inner code and channel together as a \emph{superchannel} as discussed in \cite{Forney}. Since the outer code is a general forward error-correction code, this equivalent mapping has greatly reduced the problem to now finding the length $N$ inner code that maximizes $SNR_{c_{i}}$ and, thus, minimize the modified noise on the channel.  In the next few sections, this will be our exact focus.

\begin{figure}
	\centering
		\includegraphics[scale = 0.7]{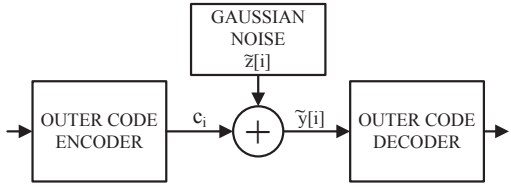}
	\caption{Simplified concatenated coding scheme ($\tilde{z}[i] \sim \cN(0,\frac{\rho}{SNR_{c_{i}}})$).}\label{concode2}
\end{figure}

\section{The Inner Code: Linear Feedback Coding}\label{sec:sys setup}
As stated above, the focus of this section is to design a length $N$ inner code that maximizes received SNR.  Since we have the availability of feedback for the inner code (see Fig. \ref{concode}) and it is a main focus of the paper, we will utilize feedback side-information at the inner code encoder.  With this setup, it is possible to employ linear feedback encoding - the advantages of which were described in the Section I.  To begin, the focus is now narrowed to only the inner code encoder/decoder pair; hence, we will only be concerned with sending and receiving one codeword of the inner code (i.e., $[s_{1}(c_{i}),s_{2}(c_{i}),\ldots,s_{N}(c_{i})]$).  This corresponds to looking at channel uses $k = ((i-1)N + 1),\ldots,iN$.  For simplicity, we study the case when $i = 1$.  To begin, the notion of general linear feedback coding is introduced.  Because of the focus of this section on the inner code and to ease with reference, we to refer to the inner code encoder as the transmitter and the inner code decoder as the receiver.

\subsection{General Linear Feedback Encoding}
\begin{figure*}
	\centering
		\includegraphics[width = 15cm, height = 5cm]{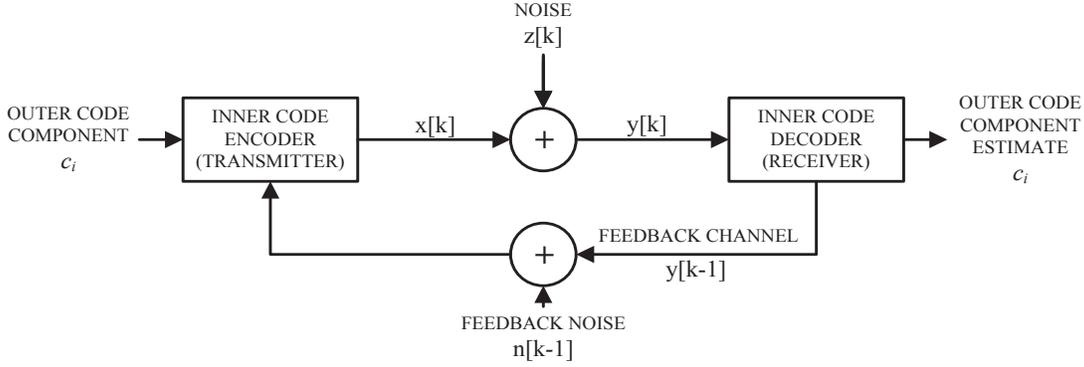}
	\caption{A transmitter/receiver pair over an AWGN channel with noisy feedback.}\label{feedback4}
\end{figure*}
In this section, we introduce the general framework of a linear feedback coding scheme in a linear algebraic formulation (similar to \cite{Butman}).  A feedback channel allows the transmission of data from the receiver back to the transmitter.  Considering the system in Fig. \ref{feedback4}, we see that such a link is available with unit delay and additive noise.  As in Section II, at channel use $k = 1,2,\ldots,N$, $x[k]$ is sent from the transmitter across an AWGN channel and the receiver receives
\begin{equation}
y[k] = x[k] + z[k],
\label{system}
\end{equation}
\noindent where $\left\{z[k]\right\}$ are i.i.d. such that each $z[k] \sim \cN(0,1)$.  Because of the feedback channel, the transmitter also has access to side-information.  In this case, we assume the side-information to be the past values of $y[k]$ corrupted by additive noise, $n[k]$.  We assume that $\left\{n[k]\right\}$ are i.i.d. such that $n[k] \sim \cN(0,\sigma^2)$ and $\left\{n[k]\right\}$ are independent of $\left\{z[k]\right\}$ .  Since we are designing an encoding scheme that will utilize feedback, $x[k]$ is encoded at the transmitter using the noisy side information $\left\{y[1]+n[1],y[2]+n[2],\ldots,y[k-1]+n[k-1]\right\}$.  By removing the known transmitted signal contribution, this is equivalent to encoding with side information $\left\{z[1] + n[1], z[2] + n[2],\ldots,z[k-1]+n[k-1]\right\}$.

We now describe a general coding scheme that utilizes this channel and feedback configuration.  The goal of the coding scheme is to reliably send a component of the outer code codeword, $c_{i}$, from transmitter to receiver across an additive noise channel using $N$ channel uses.  However, to broaden the applications of the developed scheme, we look sending a general message, $\theta \in \mathbb{R}$, instead of specifically $c_{i}$.  This is possible due to the independent operation of the inner code from the outer code.  We assume the message symbol $\theta$ is chosen from the set $\Theta = \left\{\theta_{1},\theta_{2},\ldots,\theta_{M}\right\}$ where $M$ is the number of symbols and each symbol is equally-likely.  Furthermore, we assume that $\theta$ is zero mean and that the second moment of $\theta$, $E[\theta^2]$, is known.  Due to the fact that only received SNR and rate of transmission calculations will be performed, the above description of the source alphabet proves sufficient.

With this set-up, the input to the receiver can be written as
\begin{equation}
	\by = \bx + \bz,
	\label{yeq}
\end{equation}
\noindent where, as above, the notation $\bx$ refers to $\bx = \left[x[1], x[2], \ldots, x[N]\right]^{T}$.
Because of the total average transmit power constraint (\ref{allpower}), the transmitted power of the signal $\bx$ (for $N$ transmissions) is bounded by
\begin{equation}
E[\bx^T\bx] \leq N\rho.
\label{power}
\end{equation}
\noindent The output of the transmitter $\bx$ is given as
\begin{equation}
	\bx = \bF(\bz + \bn) + \bg\theta,
	\label{xeq}
\end{equation}
\noindent where $\bg \in \mathbb{R}^{N}$ is a unit vector and $\bF \in \mathbb{R}^{N \times N}$ is a matrix called the \textit{encoding matrix}.  $\bF$ is of the form
\[\bF = \left[
\begin{array}{cccc}
0 &\cdots&\cdots&0\\
f_{2,1}&\ddots&&\vdots\\
\vdots&\ddots&\ddots&\vdots\\
f_{N,1}&\cdots&f_{N,N-1}&0
\end{array}
\right]
\]
\noindent which is referred to as \textit{strictly} lower-triangular to enforce causality.  Taking a closer look at (\ref{xeq}), we see that this is exactly the linear processing model - each $x[k]$ is a linear function of past values of $\left\{z[k] + n[k]\right\}$ and the message, $\theta$.


Now, consider the processing at the receiver's end.  The input to the receiver $\by$ is given by (\ref{yeq}).  Using (\ref{xeq}), (\ref{yeq}) becomes
\begin{equation}
\by = \bF(\bz + \bn) + \bg\theta + \bz = (\bI + \bF)\bz + \bF\bn + \bg\theta.
\label{yeq2}
\end{equation}
\noindent After all $N$ transmissions have been made, the receiver combines all received values as a linear combination and forms an estimate of the original message, $\hat{\theta}$.  This operation is written as
\[
\hat{\theta} = \bq^T\by,
\]
\noindent where $\bq \in \mathbb{R}^{N}$ is a vector called the \textit{combining vector}.  It is now evident that a general linear feedback scheme can be completely described in terms of $\bF, \bg,$ and $\bq$.  In fact, the S-K scheme can be described in this way, but since its definition is not necessary, it will be pushed to Appendix A.

As an aside, it is important to note that up until this point, a specific decoding process has not been specified.  However, since we will be passing on the output of the inner decoder straight to the outer decoder, we choose only to perform only soft decoding; hence the estimate will be sent straight to the outer decoder without mapping it to an output alphabet.  Of course, minimum-distance decoding (and similar techniques) can be easily implemented for hard decoding.

Looking back at the transmitted signal, it proves helpful to study how much power is used sending the message and how much is dedicated to encoding noise for noise-cancellation at the receiver.  This can be examined by noting that the average transmitted power is
\begin{eqnarray*}
E[\bx^T\bx] & = & \mathrm{tr}(\bF E[(\bz + \bn)(\bz+\bn)^T] \bF^T) + \left\|\bg\right\|^2E[\theta^2]\\
					  & = & \underbrace{(1+\sigma^2)\left\|\bF\right\|^2_{F}}_{\textnormal{noise-cancellation power}} + \underbrace{E[\theta^2]}_{\textnormal{signal power}}\\
					  & \leq & N\rho,
\end{eqnarray*}

\noindent where $\left\|\bF\right\|_{F}^{2} = \displaystyle\sum_{i,j}f_{i,j}^{2}$.

Because the sum of the noise-cancellation power and signal power must be less than $N\rho$, we introduce a new variable that will be a measure of the amount of power used for noise-cancellation.  To accomplish this, let us introduce $\gamma \in \mathbb{R}$ such that $0 \leq \gamma \leq 1$.  Using the power allocation factor $\gamma$, let $E[\theta^2]$ be scaled such that
\begin{equation}
E[\theta^2] = (1-\gamma)N\rho,
\label{qcon}
\end{equation}
\noindent and $\bF$ be constrained such that
\begin{equation}
(1+\sigma^2)\left\|\bF\right\|_{F}^{2} \leq N\gamma\rho.
\label{Fcon}
\end{equation}
\noindent Until Section IV, it is now assumed that $\gamma$ is fixed.

\subsection{Optimization of Received SNR}

As in Section II, our goal is to create a scheme that maximizes the received signal-to-noise ratio.  Not surprisingly, we have chosen it to be our main performance metric.  It can be derived by noting the form of the receiver's estimate of the transmitted message.  The received signal after combining is
\begin{equation}
\hat{\theta} = \bq^T\by = \bq^T((\bI + \bF)\bz + \bg\theta + \bF\bn).
\end{equation}
\noindent It follows that the received SNR is
	\[SNR = \frac{E[\left|\bq^T\bg\theta\right|^2]}{E[\left|\bq^T(\bI + \bF)\bz + \bq^T\bF\bn\right|^2]},
\]
\begin{equation}
\phantom{abcdef}= \frac{E[\theta^2]\left|\bq^T\bg\right|^2}{\left\|\bq^T(\bI + \bF)\right\|^2 + \sigma^2\left\|\bq^T\bF\right\|^2}.
\label{SNR}
\end{equation}

It would be ideal to optimize the SNR expression over all $\bF, \bg,$ and $\bq$.  However, this method turns out to be quite intractable.  Instead, we focus on optimization by two conditional optimization techniques that maximize SNR either given $\bF$ or given $\bq$.  Since the derivations of these methods are not necessary for our discussions, their proofs are pushed to Appendix B.  Note that the following procedures are hardly groundbreaking, but are given as lemmas to aid in later reference.  The first lemma is introduced to design $\bF$ to maximize the received SNR for a given $\bq$.

\begin{lemma}
Given a combining vector $\bq$ and the power constraint given in (\ref{Fcon}), the $\bF$ that maximizes the received SNR can be constructed using the following procedure:

\begin{enumerate}
\item Define $\bq_{(i)} = \left[q_{i+1},q_{i+2}, \ldots, q_{N}\right]^T$ where $1 \leq i \leq N - 1,$

\item Construct the entries of $\bF$, $f_{i,j}$, as
\[
f_{i,j} = \left\{
\begin{array}{lr}
-\frac{q_{i}q_{j}}{(1+\sigma^2)\left\|\bq_{(i)}\right\|^2 + \lambda},& i>j\\
0,& i\leq j
\end{array}\right.
\]

\noindent where $\lambda \in \mathbb{R}$ is the smallest $\lambda \geq 0$ such that $\left\|\bF\right\|^2_{F}  \leq  (1+\sigma^2)^{-1}N\gamma\rho$.

\end{enumerate}

\end{lemma}

\noindent The next lemma provides the symmetrical result; it constructs a $\bq$ that maximizes the received SNR given $\bF$.

\begin{lemma}
Given an encoding matrix, $\bF$, the $\bq$ that maximizes the received SNR can be found by letting $\bq$ be the eigenvector vector of $(\bI + \bF)(\bI + \bF)^T + \sigma^2 \bF\bF^T$ that corresponds to its minimum eigenvalue.
\end{lemma}

\noindent Note that this lemma has well-known analogous results in estimation theory.  In brief, the optimal $\bq$ is created by forming the projection
\begin{equation}
\bq^T = \frac{\bg^T\bC^{-1}}{\bg^T\bC^{-1}\bg},
\label{estim}
\end{equation}
\noindent where $\bC = (\bI+\bF)(\bI+\bF)^T + \sigma^2\bF\bF^T$.  However, it can be shown that to choose $\bg$ to maximize the received SNR given $\bF$, then (\ref{estim}) implies that $\bq = \bg$ and both should be chosen as in Lemma 2.  In addition, it is possible show that given $\bF$, the definition of $\bq$ in Lemma 2 produces the minimum variance unbiased (MVU) estimator.  To illustrate, the variance of the estimator, $\hat{\theta}$ is
\[
\var(\hat{\theta}) = E[(\theta - \hat{\theta})^2] = \bq^T\left[(\bI + \bF)(\bI + \bF)^T + \sigma^2 \bF\bF^T\right]\bq.
\]
Since $\bq$ has already been chosen to minimize this quantity and it is unbiased ($E[\hat{\theta}] = \theta$), it is the MVU estimator (consequentially also the least squares estimator).

These two lemmas will now prove sufficient for developing a linear feedback scheme that maximizes the received SNR as in Section IV.

\subsection{Upper Bound on Rate and Received SNR}
Due to the development of Lemmas 1 and 2, we can now apply them to construct some interesting results on the class of linear feedback codes.  First, an upper bound on received signal-to-noise ratio is found.

The method used in Lemma 1 to maximize the received SNR focuses on minimizing the denominator of (\ref{SNR}).  It does so while also compensating for the average power constraint given in (\ref{power}).  If this constraint is relaxed to allow the denominator of the SNR to be minimized completely, it is possible to derive an upper bound on the received SNR.

\begin{lemma}
The received SNR for a linear feedback encoding scheme with feedback noise variance, $\sigma^2$, is bounded by
\begin{equation}
SNR \leq \frac{1+\sigma^2}{\sigma^2}N\rho
\label{SNRbound}
\end{equation}
\end{lemma}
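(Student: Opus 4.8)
The plan is to fix the combining vector $\bq$ and derive a lower bound on the denominator of the SNR expression (\ref{SNR}) by dropping the power constraint (\ref{Fcon}) entirely, exactly as the remark preceding the lemma suggests. Writing $\bv^T = \bq^T\bF$, the denominator becomes $\|\bq^T + \bv^T\|^2 + \sigma^2\|\bv^T\|^2$. Since every strictly lower-triangular $\bF$ produces some $\bv \in \bbR^N$, minimizing this quadratic over all of $\bbR^N$ can only lower its value, so the unconstrained minimum is a legitimate lower bound on the true denominator for any admissible $\bF$.

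First I would carry out that unconstrained minimization. Setting the gradient in $\bv$ to zero gives the optimizer $\bv = -\bq/(1+\sigma^2)$, and substituting back yields the minimum value
\[
\min_{\bv \in \bbR^N}\left(\|\bq^T + \bv^T\|^2 + \sigma^2\|\bv^T\|^2\right) = \frac{\sigma^2}{1+\sigma^2}\,\|\bq\|^2 .
\]
Hence the denominator of (\ref{SNR}) is bounded below by $\frac{\sigma^2}{1+\sigma^2}\|\bq\|^2$.

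Next I would bound the numerator. By Cauchy--Schwarz and the fact that $\bg$ is a unit vector, $|\bq^T\bg|^2 \leq \|\bq\|^2\|\bg\|^2 = \|\bq\|^2$, so the $\|\bq\|^2$ factors cancel and the SNR is at most $\frac{1+\sigma^2}{\sigma^2}E[\theta^2]$. Finally, invoking the signal-power allocation (\ref{qcon}) together with $0 \le \gamma \le 1$ gives $E[\theta^2] = (1-\gamma)N\rho \leq N\rho$, which produces the claimed bound $SNR \leq \frac{1+\sigma^2}{\sigma^2}N\rho$.

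The one place that needs care is justifying the relaxation: the actual feasible set for $\bq^T\bF$ is not all of $\bbR^N$, since strict lower-triangularity forces the last coordinate of $\bq^T\bF$ to vanish. I must therefore argue that enlarging the feasible set to $\bbR^N$ only decreases the minimum and so preserves the direction of the inequality. Everything after that is a routine quadratic minimization followed by a single application of Cauchy--Schwarz, so I expect no further obstacles.
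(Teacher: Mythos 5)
Your proof is correct. It rests on the same core idea as the paper's --- drop the power constraint (\ref{Fcon}) and lower-bound the denominator of (\ref{SNR}) by an unconstrained minimum --- but the execution is genuinely different. The paper routes the argument through the machinery of Lemma~1's proof: it keeps the parametrization $\bq^T\bF = -(\bA\bb)^T$ with $\bff_i \propto -\bq_{(i)}$, solves the unconstrained least-squares problem for $\bb$, and then lower-bounds $\left\|\bq^T(\bI+\bF)\right\|^2$ and $\left\|\bq^T\bF\right\|^2$ separately using $\|\bq\|=1$. You instead substitute $\bv^T = \bq^T\bF$ and minimize the single quadratic $\|\bq+\bv\|^2 + \sigma^2\|\bv\|^2$ over all of $\bbR^N$, obtaining $\frac{\sigma^2}{1+\sigma^2}\|\bq\|^2$ in one step; since the set of achievable $\bv$ (which in fact has vanishing last coordinate, by strict lower-triangularity) is contained in $\bbR^N$, the relaxation only lowers the minimum and the inequality direction is preserved, as you correctly flag. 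Your route buys three things: it does not depend on the column-direction choice $\bff_i \propto -\bq_{(i)}$ made in Lemma~1 (so the bound manifestly covers \emph{all} strictly lower-triangular $\bF$, not just those of that form); it avoids the least rigorous step of the appendix (the claim that the two denominator terms can be minimized over "the same subspace"); and it needs no normalization of $\bq$, since $\|\bq\|^2$ cancels between numerator and denominator. What the paper's longer route buys is an explicit near-minimizing $\bF$, which is what later lets the authors argue the bound is asymptotically attained by the proposed scheme. Two small points to tidy: state explicitly that $\bg$ is a unit vector (the paper assumes this in (\ref{xeq})) before invoking Cauchy--Schwarz, and note that the case $\bq=\b0$ is degenerate (zero numerator), so the cancellation of $\|\bq\|^2$ is legitimate.
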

\begin{proof}
Looking at the proof of Lemma 1 (in Appendix B), the goal is to maximize the received SNR by minimizing the denominator in (\ref{SNR}).  However, the average power constraint in (\ref{Fcon}) restricts the optimization problem and the solution is not optimal in a least-squares sense.  If the power constraint is removed, (\ref{bopt1}) becomes
\begin{equation}
\bb_{opt} = \underset{\bb}{\operatorname{argmin}}\left\|\bA\bb-\bq\right\|^2 + \sigma^2\left\|\bA\bb\right\|^2.
\end{equation}

This results in the solution to the least-squares problem being
\[
\bb = ((1+\sigma^2)\bA^T\bA)^{-1}\bA^T\bq.
\]
\noindent Using this $\bb$ to construct $\bF$, (\ref{min}) becomes
\begin{equation}
\left\|\bq^T(\bI + \bF)\right\|^2 = \displaystyle\sum_{i=1}^{N-1}\left(q_{i} - \frac{q_{i}}{1+\sigma^2}\right)^2 + q^{2}_{N}
\label{min2}
\end{equation}
\begin{equation}
\geq \left(\frac{\sigma^2}{1+\sigma^2}\right)^2.
\label{min3}
\end{equation}

\noindent Similarly, the other noise term is
\begin{equation}
\left\|\bq^T\bF\right\|^2 = \displaystyle\sum_{i=1}^{N-1}\left(\frac{q_{i}}{1+\sigma^2}\right)^2 + q^{2}_{N}
\label{min4}
\end{equation}
\begin{equation}
\geq \frac{1}{(1+\sigma^2)^2}.
\label{min5}
\end{equation}

Using these two results, the received SNR, using (\ref{SNR}), can be written as
\begin{eqnarray}
SNR &\leq& \frac{E\left[\theta^2\right]}{\left(\frac{\sigma^2}{1+\sigma^2}\right)^2 + \frac{\sigma^2}{(1+\sigma^2)^2}}\\
 &=& \frac{1+\sigma^2}{\sigma^2}E\left[\theta^2\right] \\
 &\leq& \frac{1+\sigma^2}{\sigma^2}N\rho \label{SNRupper}
\end{eqnarray}
\end{proof}

In \cite{Butman}, another upper bound is given for linear feedback schemes with noise on the feedback channel.  Using the notation consistent with the above formulations, the Butman bound can be given by:
\begin{eqnarray}
SNR &\leq& E\left[\bx^T\bx\right] + \frac{1}{\sigma^2}E\left[\by^T\by\right],\\
    &=& \frac{1+\sigma^2}{\sigma^2}N\rho + 2\|\bF\|_{F}^{2} + \frac{N}{\sigma^2}.
\end{eqnarray}
\noindent Since the last two terms in the inequality are strictly greater than zero, this bound is strictly greater than (\ref{SNRbound}), implying a helpful tightness in the new bound in Lemma 3.

Suppose that we now allow the size of the symbol set, $\Theta = \{\theta_{1},\theta_{2},\ldots,\theta_{M}\}$, to be a function of the blocklength (i.e., $M^{(N)}$).  The rate in bits per channel use of our linear encoding is defined as $r^{(N)} = \log_2(M^{(N)})/N$ (Note that $r$ is used instead of $R$ to emphasize that this only applies to the inner code).  A rate $r = \lim_{N\rightarrow \infty}r^{(N)}$ is said to be achievable if the probability of error goes to zero as $N\rightarrow \infty.$  Also, if the linear feedback scheme is viewed as a superchannel (as in Fig. \ref{concode2}), the received SNR for the linear feedback scheme can be seen as the received SNR for the superchannel. Thus, the capacity of the superchannel is $\frac{1}{2}\log_{2}(1+SNR)$, where $SNR$ is the received SNR for the linear feedback scheme in use.  Now using the SNR bound result, we can construct an alternate proof of Proposition 4 given in \cite{Kim2}.
\begin{lemma}
Given any linear feedback coding scheme of rate $r$ over an AWGN channel with noisy feedback, if $r$ is achievable then  $r=0.$
\end{lemma}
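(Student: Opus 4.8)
The plan is to combine the superchannel viewpoint introduced just above the statement with the signal-to-noise ratio bound of Lemma 3, exploiting the fact that the capacity of a Gaussian channel grows only logarithmically in its SNR whereas the rate $r^{(N)}$ carries a normalizing factor of $1/N$.

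First I would make the superchannel precise. After combining, the message and its estimate are linked by the scalar additive-noise channel $\hat{\theta} = (\bq^T\bg)\theta + w$, where $w = \bq^T((\bI + \bF)\bz + \bF\bn)$. Because $\{z[k]\}$ and $\{n[k]\}$ are Gaussian and independent of $\theta$, the term $w$ is zero-mean Gaussian and independent of $\theta$, and its variance is exactly the denominator of (\ref{SNR}); hence the input/output signal-to-noise ratio of this scalar channel equals the received $SNR$. Since a Gaussian density maximizes differential entropy for a fixed second moment, for every admissible distribution on $\theta$ one has $I(\theta;\hat{\theta}) \leq \frac{1}{2}\log_2(1+SNR)$, which is precisely the capacity claim asserted before the lemma.

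Next I would apply the converse. For the length-$N$ inner code the symbol $\theta$ is drawn uniformly from $M^{(N)}$ equally likely messages and decoded from $\hat{\theta}$ with error probability $P_e^{(N)}$, so Fano's inequality gives $\log_2 M^{(N)} = H(\theta) \leq I(\theta;\hat{\theta}) + 1 + P_e^{(N)}\log_2 M^{(N)}$. Combining this with the mutual-information bound above and then with the SNR bound of Lemma 3 yields, after dividing by $N$, an inequality of the form
\begin{equation}
(1 - P_e^{(N)})\,r^{(N)} \leq \frac{1}{2N}\log_2\!\left(1 + \frac{1+\sigma^2}{\sigma^2}N\rho\right) + \frac{1}{N}.
\end{equation}
If $r$ is achievable then $P_e^{(N)} \to 0$ and $r^{(N)} \to r$, while the right-hand side tends to $0$ because $\log_2(N)/N \to 0$; since a rate is nonnegative, this forces $r = 0$.

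The only substantive ingredient is Lemma 3: it caps the received SNR at a quantity that is linear in $N$, so the per-channel-use superchannel capacity decays like $\frac{1}{2N}\log_2 N$. Consequently I expect no delicate estimation to be needed, and the main thing to get right is the converse bookkeeping, namely that a vanishing $P_e^{(N)}$ together with the $\frac{1}{2}\log_2(1+SNR)$ capacity bound pins $\log_2 M^{(N)}$ to order $\log_2 N$; once that is in place the conclusion is immediate.
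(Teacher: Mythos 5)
Your proposal is correct and follows essentially the same route as the paper: view the linear scheme as a superchannel with capacity $\frac{1}{2}\log_2(1+SNR)$, cap the $SNR$ by the $\frac{1+\sigma^2}{\sigma^2}N\rho$ bound of Lemma 3, and observe that the resulting per-channel-use capacity $O\left(\frac{\log_2 N}{N}\right)$ vanishes. The paper simply asserts the converse step that any achievable rate is bounded by this limit, whereas you make it explicit via the Gaussian maximum-entropy bound and Fano's inequality; that is a welcome filling-in of detail rather than a different argument.
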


\begin{proof}
As above, if regarding the linear feedback scheme over the AWGN channel as a superchannel, the capacity is
\begin{equation}
C = \frac{1}{2}\log_2{(1 + SNR)},
\end{equation}
\noindent where $SNR$ is the received SNR of the linear feedback scheme.  Then, any achievable rate $r$ must satisfy
\begin{eqnarray}
r &\leq& \lim_{N \rightarrow \infty}\frac{\frac{1}{2}\log_2{(1 + \frac{1+\sigma^2}{\sigma^2}N\rho)}}{N},\\
&=& 0.
\end{eqnarray}
\end{proof}

\section{A Linear Feedback Coding Scheme}
Now, we use both methods presented in Lemmas 1 and 2 as iterative optimization tools.  Using Lemma 1, we can design $\bF$ to maximize the received SNR.  We can do the same using Lemma 2 to design $\bq$.  However, it is desirable to optimize $\bq$ and $\bF$ jointly to maximize the SNR.  Consider being given an initial combining vector, $\bq^{(0)}$.  Using Lemma 1, we can design an encoding matrix $\bF^{(0)}$ to maximize the received SNR.  Now, that $\bF^{(0)}$ has been constructed, we can use Lemma 2 to further maximize the received SNR by designing $\bq^{(1)}$.  This process can be repeated until the received SNR does not increase with an iteration (i.e., we have reached a fixed point).

After repeatedly using this algorithm for different $\bq^{(0)}$ and different values of $N$ and $\rho$, a pattern emerges.  The structures of both $\bF$ and $\bq$ are the same for every scheme that maximizes the received SNR.  Using random search techniques, we were unable to find an alternate form that produced a higher received SNR.  Thus, empirically, the problem appears convex - the same result was produced independent of the random initial vector, $\bq^{(0)}$.  In the following conjecture, we propose that these structures of $\bF$ and $\bq$ give the scheme that maximizes the received SNR.

\subsection{The Feedback Scheme}
\begin{conjecture}
Consider again the system from Fig. \ref{feedback4}. Then, given the power constraints in (\ref{qcon}) and (\ref{Fcon}), the $\bF$ and $\bq$ that maximize the received SNR are of the following forms:
\begin{itemize}
\item $\bF$ is a strictly lower diagonal matrix with all entries along the diagonals being equal (also called a Toeplitz matrix),
\item $(1+\sigma^2)\|\bF\|_F^2 = N\gamma \rho$,
\item For some $\beta \in \mathbb{R}$ such that $\beta \in (0,1)$, the form of $\bq$ is
\[\bq = \sqrt{\frac{1-\beta^2}{1-\beta^{2N}}}\left[1,\beta,\beta^2,\ldots,\beta^{N-1}\right]^T.
\]
\end{itemize}
\end{conjecture}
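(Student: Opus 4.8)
The plan is to recast the joint maximization as a single eigenvalue-minimization problem and then attack it through the coupled stationarity (fixed-point) conditions already supplied by Lemmas 1 and 2. First I would invoke the remark following Lemma 2 to set $\bg=\bq$ with $\|\bq\|=1$, so that the numerator of (\ref{SNR}) is constant and, since $E[\theta^2]=(1-\gamma)N\rho$ is fixed by (\ref{qcon}), maximizing the received SNR is equivalent to minimizing
\[ D(\bF,\bq)=\bq^T\bC\bq,\qquad \bC=(\bI+\bF)(\bI+\bF)^T+\sigma^2\bF\bF^T, \]
over strictly lower-triangular $\bF$ satisfying (\ref{Fcon}) and unit-norm $\bq$. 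Minimizing over $\bq$ first collapses this to minimizing $\lambda_{\min}(\bC(\bF))$ over feasible $\bF$, which is the clean form I would work with. Existence of a maximizer is immediate by compactness of the feasible set and continuity of the objective.

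Next I would write down the conditions a joint maximizer must satisfy. Because Lemma 1 (optimal $\bF$ given $\bq$) and Lemma 2 (optimal $\bq$ given $\bF$) each strictly improve the objective unless already met, any maximizer is simultaneously a fixed point of both: $\bq$ is the minimum-eigenvalue eigenvector of $\bC$, and $\bF$ has the Lemma 1 form $f_{i,j}=-q_iq_j/((1+\sigma^2)\|\bq_{(i)}\|^2+\lambda)$ for $i>j$. I would then dispatch the second bullet by a short perturbation argument: for fixed $\gamma$ all of the allotted power $N\gamma\rho$ should be spent on noise cancellation, since enlarging the $\bF$-entries along the Lemma 1 direction strictly decreases $D$; hence (\ref{Fcon}) binds, $(1+\sigma^2)\|\bF\|_F^2=N\gamma\rho$.

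With these conditions in hand, I would substitute the geometric ansatz $q_k=c\,\beta^{k-1}$, $c=\sqrt{(1-\beta^2)/(1-\beta^{2N})}$, and verify self-consistency. The key algebraic observation is that $\|\bq_{(i)}\|^2=c^2\beta^{2i}(1-\beta^{2(N-i)})/(1-\beta^2)$, so that the Lemma 1 ratio $q_iq_j/((1+\sigma^2)\|\bq_{(i)}\|^2+\lambda)$ reduces — away from the bottom boundary, where $\beta^{2(N-i)}$ is negligible, and for the appropriate $\lambda$ — to a function of $i-j$ alone, which is exactly the claimed strictly-lower Toeplitz structure. It then remains to verify the eigenvector equation $\bC\bq=\lambda_{\min}\bq$ for this pair; carrying out the two remaining inner products and matching coefficients should pin down a single scalar equation relating $\beta$, $\sigma^2$, and $\gamma\rho$, which simultaneously closes the fixed point and identifies the optimal $\beta\in(0,1)$.

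The main obstacle is global optimality. Reformulated as minimizing $\lambda_{\min}(\bC(\bF))$, the objective is a \emph{concave} function of $\bF$, so the fixed-point conditions are necessary but not sufficient, and this is presumably why the statement is offered as a conjecture supported only by the numerical convexity observed under random restarts. To upgrade the verification to a proof I see two complementary routes. One is to pass to the large-$N$ limit and recast $D$ through the symbols (z-transforms) $F(z)=\sum_k f_k z^k$ and $Q(z)$, turning the constrained problem into a spectral-factorization problem in which the geometric $\bq$ emerges naturally as a single-pole filter and the edge terms in $\bF\bF^T$ and in $\|\bq_{(i)}\|^2$ near $i=N$ that spoil exact Toeplitz structure at finite $N$ vanish; this would at least establish the asymptotic optimality the paper claims. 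The other is to produce a matching lower bound on $\min_{\bF}\lambda_{\min}(\bC)$ — via a semidefinite relaxation or a dual certificate — so that the value attained by the geometric/Toeplitz pair meets the bound and global optimality follows. I expect the finite-$N$ boundary effects and the concavity-induced absence of a sufficiency certificate to be where essentially all the difficulty lies.
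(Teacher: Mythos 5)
The statement you are trying to prove is presented in the paper as a \emph{conjecture}, and the paper offers no proof of it: the authors support it only by running the Lemma~1/Lemma~2 alternating optimization from many random initial vectors $\bq^{(0)}$ and observing that the same Toeplitz-$\bF$/geometric-$\bq$ structure always emerges, explicitly conceding that the problem only ``empirically appears convex.'' Your proposal is essentially a cleaned-up formalization of that same program: reduce to minimizing $\bq^T\bC\bq$ with $\bC=(\bI+\bF)(\bI+\bF)^T+\sigma^2\bF\bF^T$, observe that any maximizer must be a simultaneous fixed point of Lemmas~1 and~2, argue the Frobenius-norm constraint binds, and verify that the geometric ansatz is self-consistent. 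The verification step you describe is in substance what the paper's Lemma~5 (proved in Appendix~C) carries out — it plugs $q_k\propto\beta^{k-1}$ into the Lemma~1 construction, solves for the Lagrange multiplier $\lambda$ by forcing two adjacent subdiagonal entries to agree, and obtains the Toeplitz $\bF$ \emph{exactly} at finite $N$, not merely up to boundary terms; your worry that $\beta^{2(N-i)}$ edge effects spoil exact Toeplitz structure is unfounded once $\lambda$ is chosen as in (\ref{lambda}).

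The genuine gap is the one you yourself flag: the fixed-point conditions are necessary but not sufficient, and nothing in your argument (or in the paper) certifies global optimality. Neither of your two proposed repairs — the large-$N$ spectral-factorization limit or an SDP/dual certificate matching the attained value — is actually executed, so the proposal remains a proof plan with the decisive step missing. One technical caution if you pursue it: your claim that $\lambda_{\min}(\bC(\bF))$ is concave in $\bF$ does not follow from standard composition rules ($\lambda_{\min}$ is concave and Loewner-monotone, but $\bC(\bF)$ is matrix-convex in $\bF$, and concave-of-convex is not concave in general), so even the structural reason you give for the difficulty needs justification. In short, your proposal correctly identifies why the statement is a conjecture rather than a theorem, and correctly reconstructs the necessary conditions the paper exploits, but it does not close the gap that the paper itself leaves open.
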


\noindent Note that the term multiplying the vector $\bq$ is for normalization purposes.

Assuming that this form is optimal, we can solve for the optimal $\beta$ and the entries of $\bF$.

\begin{lemma}
Given the power constraints in (\ref{qcon}) and (\ref{Fcon}), $\bF$ and $\bq$ have the following definitions given the forms in Conjecture 1:

\begin{enumerate}

\item
The optimal $\beta$, $\beta_{0}$, is the smallest positive root of
\begin{equation}
\beta^{2N}-(N+(1+\sigma^2)N\gamma\rho)\beta^2 + (N-1),\label{betaeq3}
\end{equation}

\item
\[\bq = \sqrt{\frac{1-\beta_{0}^2}{1-\beta_{0}^{2N}}}\left[1,\beta_{0},\beta_{0}^2,\ldots,\beta_{0}^{N-1}\right]^T,
\]

\item
\[\bF = \left[
\begin{array}{ccccc}
0 & & \cdots & & 0\\
-\frac{1-\beta_{0}^2}{(1+\sigma^2)\beta_{0}} & 0\\
-\frac{1-\beta_{0}^2}{1+\sigma^2} & \ddots & \ddots & & \vdots\\
\vdots & \ddots\\
-\frac{1-\beta_{0}^2}{1+\sigma^2}\beta_{0}^{N-3} & \cdots & -\frac{1-\beta_{0}^2}{1+\sigma^2} & -\frac{1-\beta_{0}^2}{(1+\sigma^2)\beta_{0}} & 0
\end{array}\right].
\]

\end{enumerate}
\end{lemma}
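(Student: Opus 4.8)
The plan is to treat the two bullet forms of Conjecture 1 as an ansatz and to fix the remaining freedom---the geometric ratio $\beta$ and the entries of the Toeplitz matrix $\bF$---by combining SNR-optimality with the active power constraint $(1+\sigma^2)\|\bF\|_F^2 = N\gamma\rho$. I first record the simplification from the remark after Lemma 2: the SNR-optimal $\bg$ equals $\bq$, and $\bq$ has unit norm, so $|\bq^T\bg|^2 = 1$ and the expression (\ref{SNR}) collapses to $SNR = E[\theta^2]/D$ with $D = \|\bq^T(\bI+\bF)\|^2 + \sigma^2\|\bq^T\bF\|^2$. Maximizing the SNR is therefore equivalent to minimizing the quadratic $D$ over the admissible encoders, with $E[\theta^2] = (1-\gamma)N\rho$ held fixed by (\ref{qcon}).

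Next I would substitute the geometric combiner $\bq = c\,[1,\beta,\ldots,\beta^{N-1}]^T$, normalized by $c^2 = (1-\beta^2)/(1-\beta^{2N})$, and determine the optimal Toeplitz $\bF$ by the conditional optimization of Lemma 1 (equivalently, by minimizing $D$ directly with a single Lagrange multiplier $\mu$ enforcing the power constraint). Writing $\bv^T = \bq^T\bF$, each coordinate is the correlation $v_j = q_j\sum_{k\ge 1}\beta^{k}t_k$ up to a boundary truncation at $k = N-j$, so the geometric decay of $\bq$ is exactly what lets the stationarity equations for the subdiagonal values $t_1,\ldots,t_{N-1}$ decouple into a constant-ratio pattern. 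Solving that system yields $t_k = -\tfrac{1-\beta^2}{1+\sigma^2}\,\beta^{k-2}$, which is precisely the Toeplitz matrix displayed in part (3); as a consistency check one verifies through Lemma 2 that this $\bF$ returns the same geometric $\bq$ as its minimal-eigenvector combiner, so $(\bF,\bq)$ is a genuine fixed point of the alternating optimization and not merely stationary in one variable.

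With the shape of $\bF$ pinned down as a function of $\beta$, the single remaining scalar equation is the active power constraint. I would evaluate $\|\bF\|_F^2 = \sum_{k=1}^{N-1}(N-k)t_k^2$, which after factoring the constant $(1-\beta^2)^2/(1+\sigma^2)^2$ is the weighted geometric sum $\sum_{k=1}^{N-1}(N-k)x^{k-2}$ with $x = \beta^2$. Using the closed form $\sum_{k=1}^{N-1}(N-k)x^{k} = \big((N-1)x - Nx^2 + x^{N+1}\big)/(1-x)^2$, obtained by differentiating a finite geometric series, substituting into $(1+\sigma^2)\|\bF\|_F^2 = N\gamma\rho$ makes the factors $(1-x)^2$ cancel; clearing the remaining factor of $x$ leaves $x^N - \big(N + (1+\sigma^2)N\gamma\rho\big)x + (N-1) = 0$, i.e.\ the stated polynomial in $\beta^2$. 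Evaluating it at the endpoints gives value $N-1>0$ at $x=0$ and $-(1+\sigma^2)N\gamma\rho<0$ at $x=1$, while its only interior critical point lies beyond $x=1$, so there are exactly two positive roots, one in $(0,1)$ and one exceeding $1$. Taking the smallest positive root identifies $\beta_0\in(0,1)$, consistent with Conjecture 1, and the larger root is discarded since it would force $\beta>1$, outside the admissible range; this determines $\bq$ and $\bF$ through the formulas above.

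The hard part is the second paragraph: showing that minimizing $D$ over the Toeplitz family with a geometric $\bq$ produces exactly the constant-ratio entries $t_k \propto \beta^{k-2}$. The delicacy is that the truncation of the correlation $\bq^T\bF$ near the last coordinates breaks the naive proportionality $\bq^T\bF \propto \bq^T$, so the stationarity system must be solved with the boundary terms tracked carefully; an error there changes the power of $\beta$ in the entries and hence the coefficients of the final polynomial. By contrast, once the entries are in hand the power-constraint computation that produces the polynomial is a routine geometric-series manipulation.
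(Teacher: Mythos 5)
Your proposal follows essentially the same route as the paper's Appendix C proof: apply the conditional optimization of Lemma 1 to the geometric $\bq$, use the Lagrange multiplier together with the Toeplitz requirement to pin down the subdiagonal entries $t_k = -\tfrac{(1-\beta^2)\beta^{k-2}}{1+\sigma^2}$, and then convert the active power constraint $(1+\sigma^2)\|\bF\|_F^2 = N\gamma\rho$ into the stated polynomial via the Frobenius-norm identity $\|\bF\|_F^2 = \tfrac{1}{(1+\sigma^2)^2}\bigl[\beta^{2(N-1)} + \tfrac{N-1}{\beta^2} - N\bigr]$, which your weighted-geometric-series computation reproduces exactly. Your endpoint analysis locating $\beta_0\in(0,1)$ and the Lemma 2 fixed-point consistency check are useful additions the paper omits, but the core argument is the same.
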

\noindent The proof of Lemma 5 is given in Appendix C.

Because a closed-form solution of $\beta_{0}$ is not readily available, it proves very useful to define a close approximation.  Solving for $\beta$ in (\ref{betaeq3}), we get
\begin{equation}
\beta = \sqrt{\frac{\beta^{2N} + N-1}{N + (1+\sigma^2)N\gamma\rho}}.
\label{approx1}
\end{equation}
\noindent  Since $\beta \in (0,1)$ we can assume that $\beta^{2N} << 1$ which gives us the approximation (denoted $\beta_{1}$),
\begin{equation}
\beta_{0} \approx \sqrt{\frac{N-1}{N + (1+\sigma^2)N\gamma\rho}} \approx \sqrt{\frac{1}{1 + (1+\sigma^2)\gamma\rho}} \triangleq \beta_{1}.
\label{approx2}
\end{equation}
\noindent  The approximation, $\beta_{1}$, can be derived alternatively using iterative fixed point techniques.  This method also produces a bound on the deviation from $\beta_{0}$.  However, for values of $N > 5$, this approximation becomes extremely close.

It can be shown using (\ref{SNR}), that the received SNR for this scheme (now explicitly notating that the SNR is a function of $\beta$ and $\gamma$) is
\begin{equation}
SNR(\beta,\gamma) = \frac{(1+\sigma^2)N(1-\gamma)\rho}{\sigma^2 + \beta^{2(N-1)}}.
\label{SNReq}
\end{equation}

It is important to note that using $\beta_{1}$, the scheme exceeds the power constraint in (\ref{power}) by a small amount that dies away as the blocklength gets larger.  According to our power constraints, $\left\|\bF\right\|_{F}^{2} \leq (1+\sigma^2)^{-1}N\gamma\rho$.  However, using $\beta_{1}$ to build the scheme we get
\begin{equation}
\left\|\bF\right\|_{F}^{2} = \frac{\beta^{2(N-1)}}{(1+\sigma^2)^2} + (1+\sigma^2)^{-1}N\gamma\rho.
\end{equation}
\noindent Since $\beta \in (0,1)$ and $\sigma^2 \geq 0$,
\begin{equation}
\left\|\bF\right\|_{F}^{2} \stackrel{N \rightarrow \infty}{\rightarrow} (1+\sigma^2)^{-1}N\gamma\rho.
\end{equation}
\noindent Therefore, using $\beta_{1}$ in place of $\beta_{0}$ yields very little penalty at higher blocklengths and satisfies the power constraint as $N \rightarrow \infty$.

\subsection{Optimization Over Power Constraints}

Taking another look, the linear coding scheme described in the previous section can be further optimized if now we assume that $\gamma$ is not fixed.  This will give us another degree of freedom in attempting to maximize the received SNR.  Unfortunately, as stated above, a closed form expression for $\beta_{0}$ is unavailable, so we solve for the solution for power allocation using $\beta_{1}$.
\begin{lemma}
The power allocation scheme that maximizes the received SNR, using $\beta_{1}$, can be found using the following method:
\begin{enumerate}
\item Define:
	\begin{itemize}
		\item $a = \sigma^2$,
		\item $b = \rho(1+\sigma^2)$.
	\end{itemize}
\item Let the optimal $\gamma \in [0,1]$, $\gamma_{0}$, be the smallest positive root of
\begin{equation}
a(1+b\gamma)^N - Nb(1-\gamma) + (b+1),
\label{gamma}
\end{equation}
if it exists.  If not (when (\ref{nogamma}) is true), $\gamma_{0} = 0$.
\end{enumerate}
\end{lemma}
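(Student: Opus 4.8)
The plan is to substitute the approximation $\beta_1$ from (\ref{approx2}) into the SNR expression (\ref{SNReq}), collapsing the problem to a one-dimensional maximization over $\gamma \in [0,1]$, and then to locate the maximizer purely through a sign analysis of the derivative. First I would use the abbreviations $a = \sigma^2$ and $b = \rho(1+\sigma^2)$ to write $\beta_1^2 = 1/(1+b\gamma)$, so that $\beta_1^{2(N-1)} = (1+b\gamma)^{-(N-1)}$ and (\ref{SNReq}) becomes
\[
SNR(\gamma) = \frac{Nb(1-\gamma)}{a + (1+b\gamma)^{-(N-1)}},
\]
a smooth function on $[0,1]$ that vanishes at $\gamma = 1$ (all power to noise-cancellation) and equals $Nb/(a+1)$ at $\gamma = 0$ (all power to the message).

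Next I would differentiate with respect to $\gamma$. Combining the numerator of $SNR'(\gamma)$ over a common denominator and clearing the strictly positive factor $(1+b\gamma)^N$, a direct computation shows that the sign of $SNR'(\gamma)$ equals the sign of $-P(\gamma)$, where
\[
P(\gamma) = a(1+b\gamma)^N - Nb(1-\gamma) + (b+1)
\]
is exactly the polynomial in (\ref{gamma}). The single nonroutine point is the algebraic rearrangement confirming that the bracketed expression coming from $u'v - uv'$ collapses to this $P$; in particular the exponent rises from $N-1$ to $N$ because $\tfrac{d}{d\gamma}(1+b\gamma)^{-(N-1)}$ carries a factor $(1+b\gamma)^{-N}$, and the linear terms must be tracked so that the $(N-1)$-type coefficients reassemble into $Nb(1-\gamma)$ and $(b+1)$. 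This is where I would be most careful, since it is easy to misplace the $N$-versus-$(N-1)$ bookkeeping, and it is the only genuine obstacle in the argument.

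Once the sign of the derivative is pinned to $-P(\gamma)$, the remainder is a clean monotonicity argument. I would compute $P'(\gamma) = Nb\,[\,a(1+b\gamma)^{N-1} + 1\,]$, which is strictly positive for $\gamma \geq 0$, so $P$ is strictly increasing on $[0,\infty)$ and therefore has at most one nonnegative root; this justifies the phrase ``smallest positive root.'' Since $P(1) = a(1+b)^N + (b+1) > 0$, everything is governed by the sign of $P(0) = a + 1 - (N-1)b = (1+\sigma^2)\bigl(1 - (N-1)\rho\bigr)$. If $P(0) < 0$, the intermediate value theorem gives a unique root $\gamma_0 \in (0,1)$ with $SNR' > 0$ for $\gamma < \gamma_0$ and $SNR' < 0$ for $\gamma > \gamma_0$, so $\gamma_0$ is the unique interior maximizer, establishing part (2) of the lemma. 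If instead $P(0) \geq 0$ — precisely the condition (\ref{nogamma}) — then monotonicity forces $P > 0$ on all of $(0,1]$, hence $SNR$ is strictly decreasing and the maximizer is the boundary point $\gamma_0 = 0$. The factorization of $P(0)$ makes the dichotomy transparent: noise-cancellation power is worth allocating only when $(N-1)\rho > 1$.
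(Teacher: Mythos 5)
Your route is the same as the paper's: substitute $\beta_{1}$ into (\ref{SNReq}), reduce to maximizing $(1-\gamma)/\bigl(a+(1+b\gamma)^{-(N-1)}\bigr)$ over $\gamma\in[0,1]$, and differentiate to obtain the polynomial in (\ref{gamma}). Your bookkeeping is right: clearing the positive factor $(1+b\gamma)^{N}$ from the numerator of the derivative does give $\mathrm{sign}(SNR')=\mathrm{sign}(-P)$ with $P(\gamma)=a(1+b\gamma)^{N}-Nb(1-\gamma)+(b+1)$. What you add---and the paper does not---is the monotonicity argument: $P'(\gamma)=Nb\,\bigl[a(1+b\gamma)^{N-1}+1\bigr]>0$ together with $P(1)>0$ shows the stationary point is unique and is a maximum, which is what actually justifies the otherwise unexplained phrase ``smallest positive root.'' That is a genuine improvement in rigor over the paper's two-line derivation.

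The one place you go astray is the claim that $P(0)\geq 0$ is ``precisely the condition (\ref{nogamma}).'' It is not. Your computation $P(0)=a+1-(N-1)b=(1+\sigma^{2})\bigl(1-(N-1)\rho\bigr)$ is correct, so your exact threshold for the boundary case $\gamma_{0}=0$ is $(N-1)\rho\leq 1$, i.e.\ $N\leq 1+\frac{\sigma^{2}+1}{\rho(1+\sigma^{2})}$. The paper's condition (\ref{nogamma}) carries an extra factor $\bigl(1+\frac{1}{N-1}\bigr)^{N-1}$ multiplying $\sigma^{2}$, so the two conditions coincide only when $\sigma^{2}=0$ and otherwise the paper's is strictly weaker. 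A numerical check (take $N=2$, $\sigma^{2}=1$, $\rho=1.2$, so $a=1$, $b=2.4$) gives $P(0)=-0.4<0$, hence an interior maximizer exists and beats $\gamma=0$, even though (\ref{nogamma}) is satisfied; so your characterization is the correct one and the paper's stated condition appears to be in error. You should therefore not assert the equivalence: either flag the discrepancy explicitly or state the boundary case as $(N-1)\rho\leq 1$ rather than as (\ref{nogamma}).
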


\begin{proof}
From above, the received SNR for our scheme is of the form
\begin{equation}
SNR(\beta_{1},\gamma) = \frac{(1 + \sigma^2)E[\theta^2]}{\sigma^2 + \beta_{1}^{2(N-1)}} =\frac{(1 + \sigma^2)N(1-\gamma)\rho}{\sigma^2 + \left(\frac{1}{1 + (1+\sigma^2)\gamma\rho}\right)^{N-1}}.
\label{obj}
\end{equation}
Ignoring the constants in the numerator and using the definitions in the lemma, maximizing (\ref{obj}) over $\gamma$ is equivalent to maximizing
\begin{equation}
\frac{1-\gamma}{a + (1 + b\gamma)^{-(N-1)}}.
\end{equation}
\noindent After taking the derivative and setting to zero, we get
\begin{equation}
a(1+b\gamma)^N - Nb(1-\gamma) + (b+1) = 0.
\end{equation}
\noindent Note that is possible to get no root that lies in $[0,1]$.  This occurs when
\begin{equation}
N < 1 + \frac{1}{\rho\left(1+\sigma^2\right)}\label{nogamma}
\end{equation}
In this case, the value of $\gamma$ reflects that noise-cancellation is no longer useful, and we set $\gamma$ to zero.
\end{proof}
A graph showing the behavior of $\gamma_{0}$ versus $\rho$ can be seen in Fig. \ref{gamma2} and a plot of $\gamma_{0}$ is given in Fig. \ref{gamma3}.  Note that the label \textit{linear units} is used to emphasize that the axis is plotted on a linear scale and not in dB.  The plots show the behavior of $\gamma_{0}$ with varying levels of feedback noise.  In both increasing either $\rho$ or $N$, it can be seen that $\gamma_{0}$ decays to zero eventually.  As $N$ increases, the additional use of feedback introduces more noise into the system, so at higher feedback levels the $\gamma_{0}$ will not peak as high and decay more quickly.  As $\rho$ increases, the numerator of the received SNR begins to predominate the maximization and $\gamma$ decreases to maximize $(1-\gamma)$ accordingly.

\begin{figure}
	\centering
		\includegraphics[height = 8cm,width = 9cm]{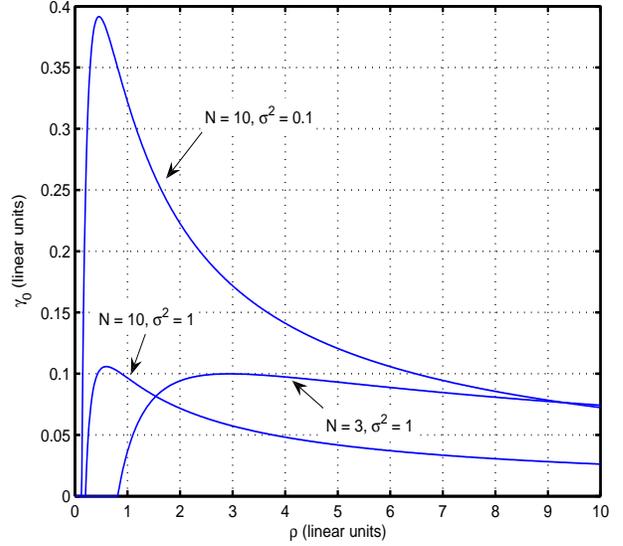}
	\caption{The behavior of $\gamma_{0}$ versus power constraint $\rho$.}
	\label{gamma2}
\end{figure}

\begin{figure}
	\centering
		\includegraphics[height = 8cm,width = 9cm]{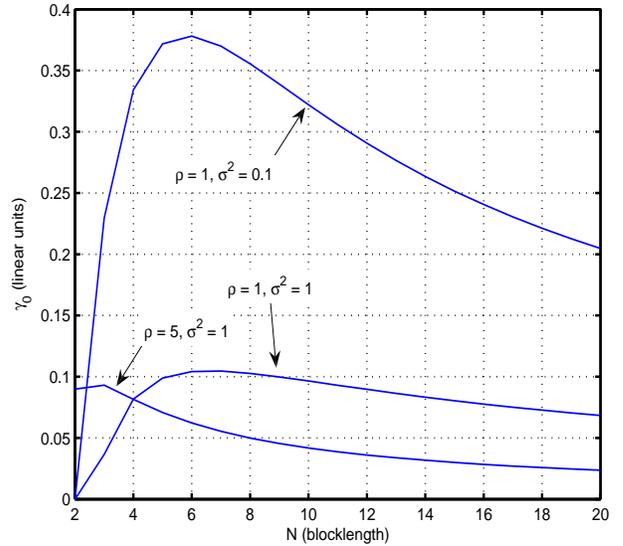}
	\caption{The behavior of $\gamma_{0}$ versus power constraint $N$.}
	\label{gamma3}
\end{figure}

An important sidenote is the behavior of this scheme (with optimal $\gamma$ and $\beta$) in the absence of feedback noise.  It turns out that as $\sigma^2 \rightarrow 0$, the method above produces the form of the solution derived in \cite{Butman} as the optimal linear feedback scheme for the AWGN channel with noiseless feedback.  However, in the presence of feedback noise, this solution noticeably differs.

\subsection{Further Analyses of the Linear Feedback Scheme}

In this section, we examine our scheme under different circumstances to derive results in related papers.

\subsubsection{Asymptotic Performance}
Using $\beta_{1}$, we can examine the asymptotic behavior of our scheme as $N \rightarrow \infty$.  If we let $\gamma = \frac{1}{\sqrt{N}}$, then the received SNR can be written as
\begin{eqnarray}
SNR\left(\beta_{1},\frac{1}{\sqrt{N}}\right) &=& \hspace{-5mm} \textstyle \frac{(1 + \sigma^2)N\left(1-\frac{1}{\sqrt{N}}\right)\rho}{\sigma^2 + \left(\frac{1}{1 + (1+\sigma^2)\frac{1}{\sqrt{N}}\rho}\right)^{N-1}},\label{first}\\
& = & \hspace{-5mm} \textstyle \frac{(1 + \sigma^2)N\left(1-\frac{1}{\sqrt{N}}\right)\rho}{\sigma^2 + \left( 1 + \frac{1}{\sqrt{N}}(1+\sigma^2)\rho\right)^{-(N-1)}},\\
& \stackrel{N \rightarrow \infty}{\rightarrow} & \frac{1+\sigma^2}{\sigma^2}N\rho. \label{asympN}
\label{asymp}
\end{eqnarray}

\noindent The received SNR of our scheme meets the upper bound in (\ref{SNRupper}) as $N \rightarrow \infty$; therefore, our scheme is asymptotically optimal.  It is worthwhile to note the choice of $\gamma$.  For this bound to appear asymptotically, $\gamma$ needs to be chosen as a function of $N$ such that $N\gamma \rightarrow \infty$ and $\gamma \rightarrow 0$ as $N \rightarrow \infty$.  Note that these constraints were motivated empirically by the behavior of $\gamma_{0}$ which is found numerically.  If $\gamma$ is not chosen within these constraints, the result (\ref{asympN}) does not apply.

\subsubsection{Binary Communications}
Now consider using our scheme to transmit a binary $(M = 2)$ symbol, $\theta$.  The probability of error, using antipodal signaling and the noise normalization currently used, can be shown to be
\begin{equation}
P_{e} = Q\left(\sqrt{SNR}\right),
\end{equation}
\noindent which as $N \rightarrow \infty$ is
\begin{equation}
P_{e} \rightarrow Q\left(\sqrt{\frac{1+\sigma^2}{\sigma^2}N\rho}\right).
\end{equation}

\noindent This expression can be bounded above by
\begin{equation}
Q\left(\sqrt{\frac{1+\sigma^2}{\sigma^2}N\rho}\right) \leq \frac{1}{2}\exp{\left[-\frac{1+\sigma^2}{2\sigma^2}N\rho\right]}.
\end{equation}
By definition, the error exponent for a given $P_{e}$ is
\begin{equation}
E(\text{binary},\rho,\sigma^2) = \lim_{N \rightarrow \infty}-\frac{1}{N} \ln\left(P_{e}\right),
\end{equation}
\noindent which in our case is
\begin{equation}
\hspace{-2mm} E(\text{binary},\rho,\sigma^2) = \hspace{-1mm}\lim_{N \rightarrow \infty}-\frac{1}{N}\ln{\left(\frac{1}{2}\exp{\left[-\frac{1+\sigma^2}{2\sigma^2}N\rho\right]}\right)}.
\end{equation}
This exponent simplifies to
\begin{equation}
E(\text{binary},\rho,\sigma^2) = \frac{(1+\sigma^2)\rho}{2\sigma^2}.
\end{equation}
\noindent This result meets the upper bound of the error exponent found in \cite{Kim2} and therefore shows that our scheme asymptotically achieves the highest rate of decay of probability of error as a function of $N$.  An illustration of this can be seen in Fig. \ref{eexp}.  This simulation was run with with exact values of $\beta_{0}$ and $\gamma_{0}$ which were found numerically.

\begin{figure}
	\centering
		\includegraphics[height = 8cm,width = 9cm]{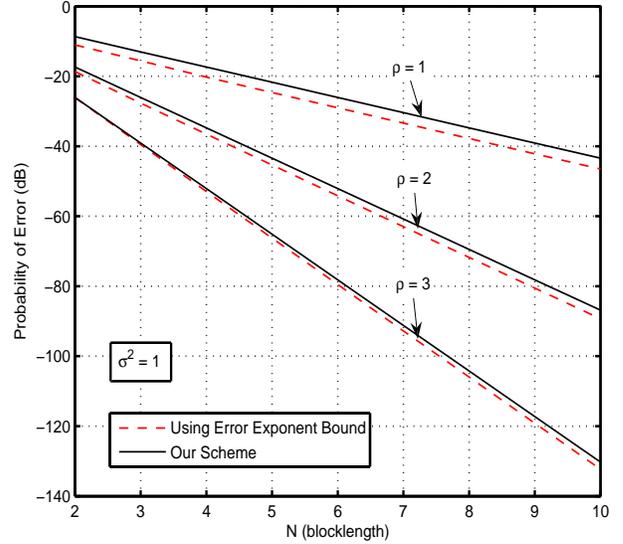}
	\caption{Comparison of the probability of error for binary transmission of the new scheme and the error exponent upper bound given in \cite{Kim2}.}
	\label{eexp}
\end{figure}
In \cite{Kim2}, a three-phase scheme is proposed that achieves this error exponent.  In brief, the message is transmitted in the first phase and, using feedback, the transmitter decides whether the receiver made the right decision.  The transmitter will then send one bit to the receiver stating whether the first transmission was a \textit{success} or a \textit{failure}.  If the transmitter decides the receiver made a wrong decision, it declares a \textit{failure} and retransmits a high-power version of the original message; otherwise, it declares a \textit{success} and does nothing.  It is important to note that this one-bit retransmission scheme was proposed in a general setting and was not constricted to binary transmissions.

\section{Simulations for the Inner Code}
We now present simulations to demonstrate the performance gains from our scheme and also the effects of feedback noise.
\subsection{Linear Feedback Comparisons}
In this section, the performance of the proposed linear scheme is compared with the Schalkwijk-Kailath (S-K) scheme (as discussed in Section I) under different circumstances.  The first simulation (Fig. \ref{skour}) plots the received SNRs for both our scheme and the S-K scheme versus the transmit SNR, $\rho$ without optimized power allocation.  The value of the optimal $\beta$, $\beta_{0}$, was found numerically and used to construct our scheme.  The feedback channel noise has variance $\sigma^2 = 0.01$.  Since the power allocation was not optimized, both schemes are using $\gamma = \frac{N-1}{N}$ (the value as given in the S-K scheme).  As can be seen, with these assumptions, our scheme shows an approximately 2 dB gain over the S-K scheme in the low $\rho$ regions $(\rho \approx 1)$.  Note that the $\rho$ axis is not in dB but a linear scale to help show the difference in performance.

\begin{figure}
	\centering
		\includegraphics[height = 8cm,width = 9cm]{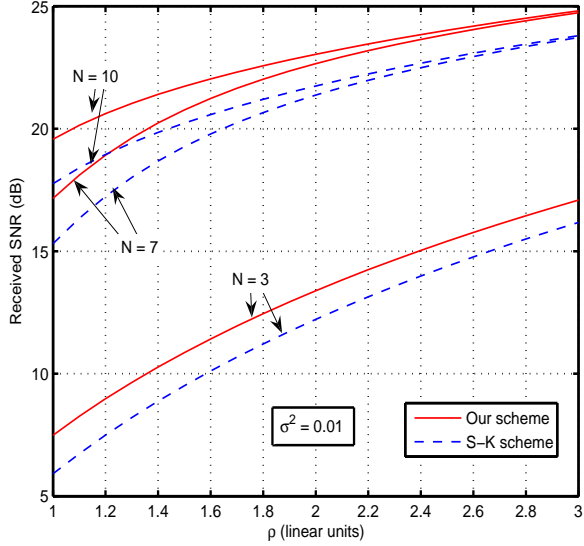}
	\caption{Comparison of the new scheme and S-K scheme with low feedback noise (without power optimization).}
	\label{skour}
\end{figure}

The next simulation (Fig. \ref{fig:sigma}) compares again the received SNR of the two schemes but for higher feedback noise $(\sigma^2 = 3)$ without power optimization ($\gamma = \frac{N-1}{N}$).  This shows quite a difference from the low feedback noise case.  Both schemes suffer a drop in performance, yet the separation between the two schemes is larger.  Another difference worth noting is the saturation of both schemes based on blocklength.  At higher feedback noise levels, blocklength does not greatly affect the performance as can be seen by the grouping of both sets of curves.  In fact, this phenomenon is due to the fact that we are using $\gamma = \frac{N-1}{N}$.  If we look at the received $SNR(\beta_{1},\frac{N-1}{N})$ for our scheme as $N \rightarrow \infty$, we can see that
\begin{eqnarray}
SNR\left(\beta_{1},\frac{N-1}{N}\right) \hspace{-5mm} &=& \hspace{-3mm}\textstyle\frac{(1 + \sigma^2)N\left(1-\frac{N-1}{N}\right)\rho}{\sigma^2 + \left(\frac{N-1}{N + (1+\sigma^2)\frac{N-1}{N}\rho}\right)^{N-1}},\\
\hspace{-5mm}& = & \hspace{-4mm}\textstyle\frac{(1 + \sigma^2)\rho}{\sigma^2 + \left( \frac{N}{N-1} + \frac{(1+\sigma^2)\rho}{N}\right)^{-(N-1)}},\\
\hspace{-5mm}& \stackrel{N \rightarrow \infty}{\rightarrow} & \frac{1+\sigma^2}{\sigma^2}\rho. \label{sksnrbound}
\end{eqnarray}
\noindent  This is a tight bound for the received SNR when using the S-K power allocation with our scheme.

\begin{figure}
	\centering
		\includegraphics[height = 8cm,width = 9cm]{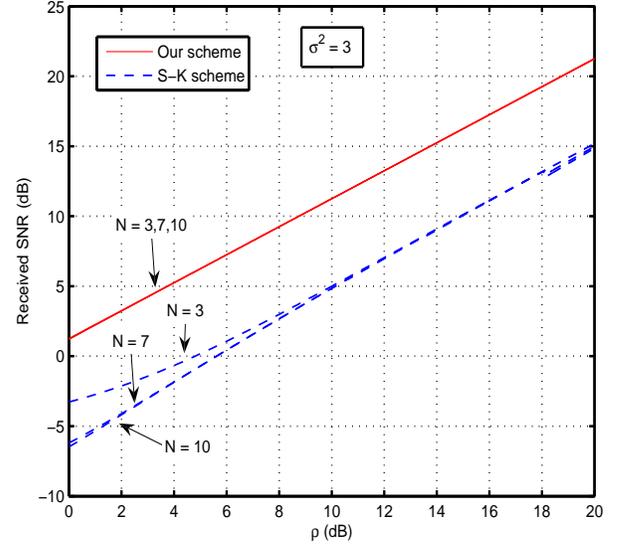}
	\caption{Comparison of the new scheme and S-K scheme with high feedback noise (without power optimization).}
\label{fig:sigma}
\end{figure}

The next figure displays the effects of optimization of power allocation.  We see from Fig. \ref{fig:power} that power allocation has greatly increased the performance of our scheme compared to the S-K scheme (still fixed at $\gamma = \frac{N-1}{N}$).  This performance increase also appears to depend on blocklength.  At $N = 3$, our scheme shows improvements in the range of 2-4 dB, but when $N = 10$, we see improvements in the range of 10 dB.  This is because it is no longer constrained by (\ref{sksnrbound}).  Because of the new choice of $\gamma$, it can now reach the $\left(\frac{1+\sigma^2}{\sigma^2}\right)N\rho$ bound.

\begin{figure}
	\centering
		\includegraphics[height = 8cm,width = 9cm]{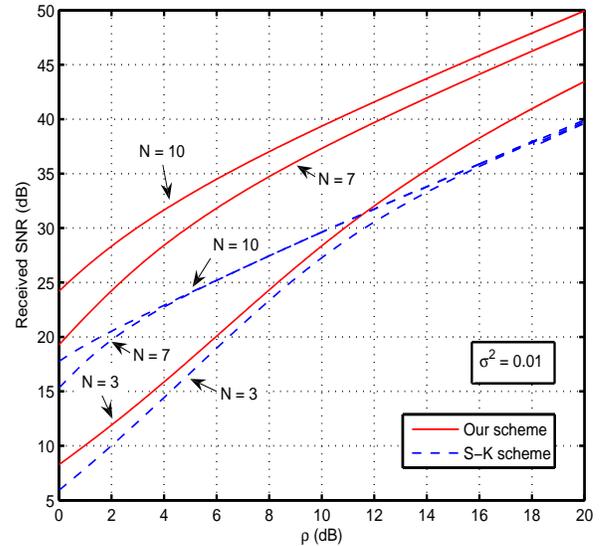}
	\caption{Optimization of power constraints provides a large improvement over the S-K scheme at low feedback noise.}
\label{fig:power}
\end{figure}

The last figure, Fig. \ref{sksigma}, shows how the received SNR of both schemes behaves with increasing feedback noise.  As is evident in the figure, the proposed linear feedback scheme is much more resilient to the effect of growing feedback noise.  Power allocation was optimized in this simulation and the average transmit power is $\rho = 1$.

\begin{figure}
	\centering
		\includegraphics[height = 8cm,width = 9cm]{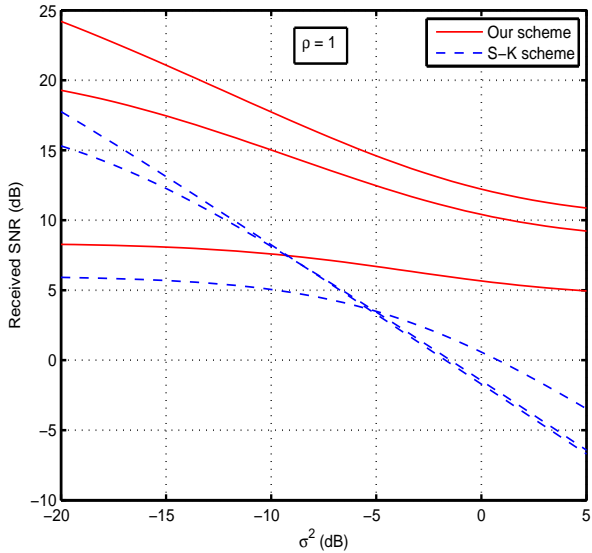}
	\caption{The proposed linear feedback scheme displays resilience in the presence of increasing feedback noise.}
\label{sksigma}
\end{figure}

\section{The Concatenated Coding Scheme}
Now that an appropriate inner code has been designed, it is possible to evaluate the performance of the total concatenated code.  For the following derivations, it is still assumed that the outer code is a general error-correction code.  In the next two sections, the error exponent for the concatenated code scheme is studied as the error exponent is an important measure of performance.  Upper and lower bounds for the error exponent are derived to illustrate the advantages of implementing feedback.

\subsection{Feedback Error Exponent Lower Bound}

The goal of this section is to find a lower bound on the reliability function for the closed-loop concatenated scheme.  To do this, we consider the best possible use of the proposed linear feedback scheme as an inner code.  To begin, let our choices for $\beta$ and $\gamma$ both be optimal such that $\beta = \beta_{0}$ from Lemma 5 and $\gamma = \gamma_{0}$ from Lemma 6 (i.e., $E[\bth^T\bth] = KN(1-\gamma_{0})\rho$).  As discussed in Section II, the problem can now be transformed into designing a $K$ channel use code for a non-feedback AWGN channel with received SNR

\begin{equation}
SNR(N,\sigma^2,\rho) = \frac{(1+\sigma^2)N(1-\gamma_{0})\rho}{\sigma^2 + \beta_{0}^{2(N-1)}},
\end{equation}
\noindent where $SNR$ is now only a function of $N$, $\sigma^2$, and $\rho$ (implicitly both $\gamma_{0}$ and $\beta_{0}$ are also functions of $N$, $\sigma^2$, and $\rho$).

Utilizing this non-feedback channel, we will now derive the error exponent expression using the open-loop reliability function.  The open-loop reliability function is defined as the rate of decay of probability of error for the best possible length $K$ coding sequence across a non-feedback channel or
\begin{equation}
E_{NoFB}(R;P) = \mathop{\mathrm{lim\hspace{1mm}sup}}_{K\rightarrow \infty}-\frac{1}{K}\ln P_{e}(R;P),
\end{equation}
\noindent coding at a rate of $R$ (bits/channel use) with a received signal-to-noise ratio $P$ and achieving a probability of error of $P_{e}(R;P)$.  Now, implementing the optimal open-loop code as the outer code over the new non-feedback channel, we achieve an open-loop error exponent of
\begin{equation}
\frac{1}{N}E_{NoFB}\left(NR;SNR(N,\sigma^2,\rho)\right).
\label{ee}
\end{equation}
The rate scaling by $N$ is due to the fact that our total blocklength has increased by a factor of $N$, but at the same time, we can only send a new symbol every $N$ channel uses.  Also, because of this structure, a trade-off in error exponent performance arises as the value of $N$ varies.  $SNR(N,\sigma^2,\rho)$ grows with increasing $N$ which is favorable, but, simultaneously, the rate increases and the factor of $\frac{1}{N}$ decreases with increasing $N$ - both adversely affecting the error exponent.  Because of this trade-off we will now define the optimal $N$, $N^*$, that achieves the highest value of the error exponent,
\begin{equation}
N^* = \argsup_{N = 1,2,\ldots} \frac{1}{N}E_{NoFB}\left(NR;SNR(N,\sigma^2,\rho)\right).
\end{equation}
Using (\ref{ee}), we can now define the closed-loop concatenated code error exponent, $E_{FB}$, by
\begin{equation}
E_{FB}\left(R;P,\sigma^2\right) = \frac{1}{N^{*}}E_{NoFB}\left(N^{*}R;SNR(N^*,\sigma^2,\rho)\right).
\label{errexp}
\end{equation}
\noindent With this result, we can now examine the concrete bounds on the error exponent for the concatenated scheme and thus create bounds for feedback error exponent.  For all rates below capacity, we can employ the random coding lower bounds \cite{shannon2} on the error exponent.
\begin{lemma}
If we first define:
\begin{eqnarray*}
R_{1} &=& \frac{1}{2}\ln{\left(\frac{1}{2} + \frac{1}{2}\sqrt{1 + \frac{P^2}{4}} \right)},\\
R_{2} &=& \frac{1}{2}\ln{\left(\frac{1}{2} + \frac{P}{4} + \frac{1}{2}\sqrt{1 + \frac{P^2}{4}} \right)},\\
C &=& \frac{1}{2}\ln{(1+P)}.
\end{eqnarray*}
Then, we can write the concatenated code error exponent lower bound as in (\ref{Efb_big}),
\begin{figure*}
\begin{equation}
E_{FB}(R,P,\sigma^2) \geq \left\{\begin{array}{l r}
\frac{\rho}{4}\frac{(1+\sigma^2)(1-\gamma_{0})}{\sigma^2 + \beta_{0}^{2(N^*-1)}}(1 - \sqrt{1 - e^{-2N^*R}}),& 0\leq R \leq R_{1}\\
\frac{\rho}{4}\frac{(1+\sigma^2)(1-\gamma_{0})}{\sigma^2 + \beta_{0}^{2(N^*_{R_{1}}-1)}}(1 - \sqrt{1 - e^{-2N^*R_{1}}}) + \frac{R_{1}}{N^*} - R, & R_{1} < R \leq R_{2}\\
\frac{1}{N^*}E_{sp}(N^*R,\frac{(1+\sigma^2)N^*(1-\gamma)\rho}{\sigma^2 + \beta_{0}^{2(N^*-1)}}),& R_{2} < R \leq C\\
\end{array}
\right.\label{Efb_big}
\end{equation}
\end{figure*}
\noindent where $E_{sp}(R,P)$ is the sphere packing bound given in \cite{shannon2}, $N^*$ is chosen at each value of $R$ to maximize the bound,  and $N^*_{R_{1}}$ is the value of $N^*$ chosen at $R_{1}$.  The optimal inner code blocklength, $N^*$ should scale as
\begin{equation}
N^* = O\left(\frac{C}{R}\right).\label{orderbnd}
\end{equation}
\noindent In addition, $N^*$, for low rates, can be approximated by
\begin{equation}
\hspace{-3mm}N^* \approx \left\lfloor\mathrm{root}\left[2(N^{3/2}-N) = \frac{\cos(\theta_{N})(1-\cos(\theta_{N}))}{R\sin^2(\theta_{N})}\right]\right\rfloor,
\label{nstar}
\end{equation}
\noindent where $0 \leq R \leq R_{1}$ and $\theta_{N} = \arcsin{(-NR)}$.
\end{lemma}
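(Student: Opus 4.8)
The plan is to reduce the whole statement to the single identity (\ref{errexp}), $E_{FB}(R;P,\sigma^2)=\frac{1}{N^*}E_{NoFB}\!\left(N^*R;SNR(N^*,\sigma^2,\rho)\right)$, and then insert the classical open-loop random-coding lower bounds for the memoryless Gaussian channel. First I would recall from \cite{shannon2} the three pieces of the reliability-function lower bound for an AWGN channel of received SNR $P$: the low-rate (expurgated) exponent $\frac{P}{4}\left(1-\sqrt{1-e^{-2R'}}\right)$; the straight-line random-coding exponent of slope $-1$ on the intermediate interval; and the sphere-packing exponent $E_{sp}(R',P)$ from the critical rate up to $C=\frac{1}{2}\ln(1+P)$. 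The two interior break rates of this curve, expressed in the channel SNR $P$, are exactly $R_1$ and $R_2$ as defined in the statement (Shannon's expurgation and critical rates for a Gaussian channel).

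Next I would substitute the superchannel SNR $P=SNR(N^*,\sigma^2,\rho)$ and inner rate $R'=N^*R$ into each piece and divide by $N^*$. Using $\frac{SNR(N^*,\sigma^2,\rho)}{N^*}=\frac{(1+\sigma^2)(1-\gamma_0)\rho}{\sigma^2+\beta^{2(N^*-1)}}$, the low-rate piece becomes precisely $\frac{\rho}{4}\frac{(1+\sigma^2)(1-\gamma_0)}{\sigma^2+\beta^{2(N^*-1)}}\left(1-\sqrt{1-e^{-2N^*R}}\right)$; the intermediate piece, being a line of slope $-1$ in the inner rate, rescales to a line of slope $-1$ in $R$ with the stated constant (it picks up $\frac{R_1}{N^*}$ together with the expurgated endpoint value evaluated at $N^*_{R_1}$); and the top piece is $\frac{1}{N^*}E_{sp}$ verbatim. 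Since $N^*$ is defined to maximize $\frac{1}{N}E_{NoFB}$ at each $R$, the displayed right-hand side is a valid lower bound, which establishes the three-line inequality.

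The two remaining claims are the real work. For the scaling and the explicit approximation I would relax $N$ to a continuous variable and maximize $f(N)=\frac{1}{4N}SNR(N)\left(1-\sqrt{1-e^{-2NR}}\right)$ in the low-rate regime. Using $\beta_0^{2(N-1)}\to0$ together with the asymptotically optimal allocation $\gamma=\frac{1}{\sqrt N}$ of Section IV, so that $SNR(N)\to\frac{(1+\sigma^2)\rho}{\sigma^2}\left(N-\sqrt N\right)$ as in (\ref{asymp}), the $\sqrt N$ correction is what produces the $N^{3/2}$ term. Differentiating $f$, setting $f'(N)=0$, and introducing the substitution $\theta_N=\arcsin(\cdot)$ to linearize the exponential factor collapses the stationarity condition into $2\left(N^{3/2}-N\right)=\frac{\cos\theta_N\left(1-\cos\theta_N\right)}{R\sin^2\theta_N}$; the integer floor of its root gives the stated $N^*$. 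Extracting the leading order of this relation as $R\to0$ shows that $N^*R$ tends to a constant, hence $N^*=\Theta(1/R)$, and since $C=\frac{1}{2}\ln(1+P)$ stays bounded away from zero this gives $N^*=O\!\left(\frac{C}{R}\right)$.

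The main obstacle is this last optimization. Because $SNR(N)$ is a composite of $\gamma_0(N)$ and $\beta_0(N)$, neither in closed form, the differentiation and the passage to the clean root equation depend on controlling the errors in the $\beta_0^{2(N-1)}\to0$ and $\gamma=\frac{1}{\sqrt N}$ approximations, and on checking that the $\arcsin$ substitution is the change of variable that actually compactifies the transcendental term. Verifying that the continuous maximizer and the true integer $N^*$ differ by $O(1)$, which is what justifies taking the floor, is the delicate point; by contrast the three-piece bound is essentially bookkeeping layered on top of \cite{shannon2}.
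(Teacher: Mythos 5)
Your proposal follows essentially the same route as the paper: the three-piece bound is, as the paper states, ``a direct application of the random coding lower bounds'' of \cite{shannon2} with the superchannel SNR and inner rate $N^*R$ substituted in and the whole exponent divided by $N^*$, and the $N^*$ approximation is obtained exactly as you describe, by setting $\gamma=\frac{1}{\sqrt{N}}$ so that the low-rate objective reduces to $\frac{(1+\sigma^2)\rho}{\sigma^2}\left(1-\frac{1}{\sqrt{N}}\right)\left(1-\sqrt{1-e^{-2NR}}\right)$, then differentiating and setting the derivative to zero to obtain the stated root equation. The one minor divergence is the $N^*=O\left(\frac{C}{R}\right)$ scaling, which the paper reads off immediately from the capacity constraint $N^*R<C$ rather than from the $R\to 0$ asymptotics of the stationarity condition as you do; both arguments reach the same conclusion.
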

\begin{proof}
The bounds given in the lemma are a direct application of the random coding lower bounds in \cite{shannon2}.  The approximation for $N^*$ is derived as follows.  To avoid exceeding capacity, the constraint $N^*R < C$ must be imposed.  With this constraint, we can now build an approximation by looking at the expression for low rates (i.e., $0 \leq R \leq R_{1}$).  The SNR expression imposed by the use of the linear scheme is quite difficult to maximize over $N$ due to the reliance on $\beta$ and $\gamma$; therefore, to approximate it, we can replace it with an approximation that only relies on $\gamma$ and set $\gamma = \frac{1}{\sqrt{N}}$ as in Section IV.C.  Then, we can write
\begin{equation}
N^*_{approx} = \argmax_{N}\frac{(1+\sigma^2)\rho(1-\frac{1}{\sqrt{N}})}{\sigma^2}(1-\sqrt{1-e^{-2NR}}),
\end{equation}
and find the ``optimal" $N$ (by differentiating and setting the derivative to zero) is given by the root of
\begin{equation}
2(N^{3/2} - N) = \frac{\cos(\theta_{N})(1-\cos(\theta_{N}))}{R\sin^2(\theta_{N})},
\end{equation}
where $\theta_{N} = \arcsin(-NR)$.  The floor operation, $\lfloor \cdot \rfloor$, is used to keep $N^*$ an integer and to avoid violating (\ref{orderbnd}).
\end{proof}
Lemma 7 gives explicitly the random coding lower bounds for the concatenated coding scheme.  For completeness, the sphere-packing bound \cite{shannon2} will now also be defined.  If we first let $\theta(R) = \arcsin{e^{-R}}$, then the sphere packing bound can be given concisely as
\[
E_{sp}(\theta,P) = \frac{P}{2} - \frac{\sqrt{P} g(\theta,P) \cos(\theta)}{2} - \ln(g(\theta,P)\sin(\theta)),
\]
\[
g(\theta, P) = \frac{1}{2}\left(\sqrt{P}\cos(\theta) + \sqrt{P\cos^2(\theta) + 4}\right).
\]
The error exponent lower bounds as given in Lemma 7 can be seen in Fig. \ref{fig:eexp2}.  Note that the label ``no feedback" refers to the error exponent of purely the outer code with no inner code.

\begin{figure}
	\centering
		\includegraphics[scale = 0.6]{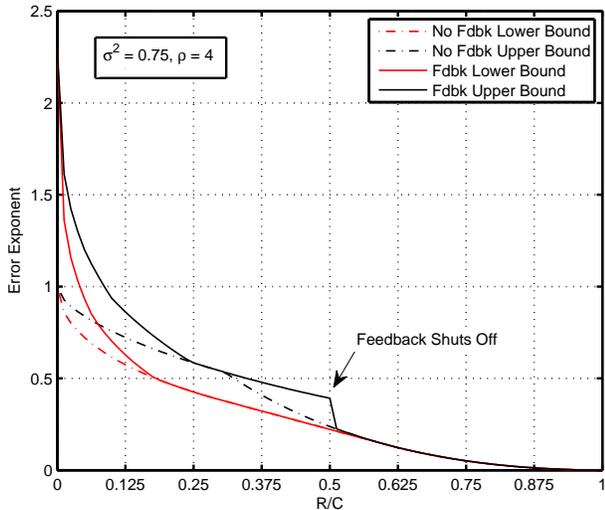}
	\caption{Error exponent bounds for non-feedback schemes and the proposed concatenated coding system.}
	\label{fig:eexp2}
\end{figure}

The $N^*$ approximation (\ref{nstar}) can be seen versus the numerically optimized $N^*$ in Fig. \ref{fig:optN}.  This gives us a rough handle on how feedback should be used (in the asymptotic sense) for the concatenated coding setup.  Namely, it should be used only at low rates but can dramatically increase the error exponent bound at these rates as seen in Fig. \ref{fig:eexp2}.

\begin{figure}
	\centering
		\includegraphics[scale = 0.62]{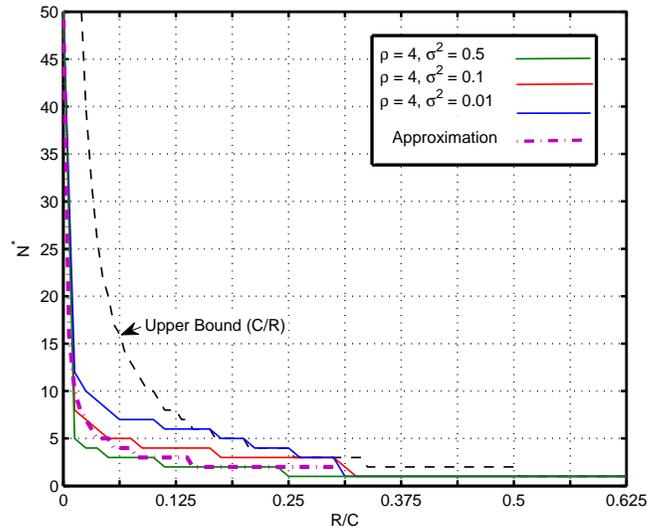}
	\caption{The $N^*$ approximation compared to the actual optimal values.}
	\label{fig:optN}
\end{figure}

\subsection{Feedback Error Exponent Upper Bound and Special Cases}
Just as important as investigating the effect of feedback on error exponent lower bounds is the effect on the upper bounds.  In this case, we employ the use of two well-known error exponent upper bounds, the minimum-distance upper bound, $E_{md}(R,P)$ \cite{Kaba,Ashik}, in conjunction with the sphere-packing bound \cite{shannon2}.  Note that the sphere-packing bound gives the exact expression for the error exponent in the $R_{2} \leq R \leq C$ region.  For reference, the minimum-distance bound can be given as
\begin{equation}
E(R,P) \leq E_{md}(R,P)  = \frac{P}{8}d^2(R),
\end{equation}
where $d^2(R)$ is the squared minimum distance of the code at rate $R$.  This can be given an upper bound as in \cite{Ashik} by first defining $\delta^*$ as the root of $R = (1+\delta)H\left(\frac{\delta}{1+\delta}\right)$ and $H(x) = -x\ln{x} - (1-x)\ln{(1-x)}$. Then,
\begin{equation}
d(R) \leq \frac{\sqrt{2}\left(\sqrt{1+\delta^*} - \sqrt{\delta^*}\right)}{\sqrt{\left(1+2\delta^*\right)}}.
\end{equation}
\noindent To ensure tightness, we take the minimum of both bounds at any given rate, $R$.  Hence, the function given in (\ref{Eupper_big}) was used to plot the upper bounds on error exponents in Fig. \ref{fig:eexp2}.
\begin{figure*}
\begin{equation}
E_{FB}(R,P,\sigma^2) \leq \min{\left(\frac{1}{N^*}E_{md}(N^*R,\frac{(1+\sigma^2)N^*(1-\gamma_{0})\rho}{\sigma^2 + \beta_{0}^{2(N^*-1)}}),\frac{1}{N^*}E_{sp}(N^*R,\frac{(1+\sigma^2)N^*(1-\gamma_{0})\rho}{\sigma^2 + \beta_{0}^{2(N^*-1)}})\right)}, 0 \leq R \leq C.\\\label{Eupper_big}
\end{equation}
\end{figure*}
\noindent Again, the value of $N^*$ is chosen to maximize the bound at any given $R$.  As in Fig. \ref{fig:eexp2}, the upper bound for feedback is higher than the upper bound in the absence of feedback.  This gap closes as feedback noise variance increases.  As noted earlier, when $N > \left\lfloor\frac{C}{R}\right\rfloor$ or when $R > \frac{C}{2}$, feedback should not be employed as the scheme exceeds the effective capacity of the superchannel - this is noted in the graph.

The main idea introduced by this section (and the previous), is that the implementation of feedback can allow for a new tradeoff - explicitly between rate and received SNR - that can exploited for further increases.  Of course, this tradeoff becomes less useful as the feedback noise increases, but it still creates a new degree of freedom.  Also, another conclusion it is possible to derive is that feedback is very beneficial at low rates.  This will be substantiated further by simulations in Section \ref{sec:sim2}.

Now, the error exponent of the concatenated coding scheme is given in the special cases of $R = 0$ and where feedback is no longer useful.
\begin{lemma}{Error Exponent Special Cases}
\begin{enumerate}
\item At $R = 0$, the error exponent for the above concatenated scheme is:
\[
E_{FB}(R=0,P)=\frac{1+\sigma^2}{4\sigma^2}\rho.
\]
\item For $R > \frac{1}{2}\log_{2}\left(1 + \frac{2(1+\sigma^2)(1-\gamma_{0})\rho}{\sigma^2}\right)$, the error exponent is
\[
E_{FB}(R,P) = E_{NoFB}(R,P),
\]
\noindent i.e., feedback is not used.
\end{enumerate}
\end{lemma}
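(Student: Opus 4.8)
The plan is to handle the two cases separately, in each one reducing the $\argsup$ over the inner blocklength $N$ that defines $E_{FB}$ to a statement about either the per-channel-use received SNR $SNR(N,\sigma^2,\rho)/N$ or the superchannel capacity $C_{\mathrm{super}}(N) := \frac{1}{2}\log_2(1+SNR(N,\sigma^2,\rho))$. Throughout I would use the closed form $SNR(N,\sigma^2,\rho) = \frac{(1+\sigma^2)N(1-\gamma_0)\rho}{\sigma^2+\beta_0^{2(N-1)}}$, the upper bound $SNR \le \frac{1+\sigma^2}{\sigma^2}N\rho$ from Lemma 3, and the asymptotic facts $\gamma_0 \to 0$ and $\beta_0^{2(N-1)} \to 0$ as $N\to\infty$ established in Section IV.

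For part (1), I would start from $E_{FB}(0;P,\sigma^2) = \sup_{N} \frac{1}{N}E_{NoFB}\!\left(0;SNR(N,\sigma^2,\rho)\right)$ and use that the zero-rate reliability of the Gaussian superchannel is exactly $E_{NoFB}(0;P) = P/4$; this is also what the low-rate branch of Lemma 7 returns at $R=0$, since $1-\sqrt{1-e^{-2N\cdot 0}} = 1$. Hence $E_{FB}(0) = \sup_{N} \frac{SNR(N,\sigma^2,\rho)}{4N} = \sup_{N} \frac{(1+\sigma^2)(1-\gamma_0)\rho}{4\left(\sigma^2+\beta_0^{2(N-1)}\right)}$. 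I would then squeeze this supremum from two sides: Lemma 3 gives $\frac{SNR(N)}{4N} \le \frac{(1+\sigma^2)\rho}{4\sigma^2}$ for every $N$ (equivalently, $(1-\gamma_0)\le 1$ and $\sigma^2+\beta_0^{2(N-1)}\ge \sigma^2$), while letting $N\to\infty$ so that $\gamma_0\to 0$ and $\beta_0^{2(N-1)}\to 0$ drives the same quantity up to $\frac{(1+\sigma^2)\rho}{4\sigma^2}$. Since the uniform cap and the limit coincide, the supremum equals $\frac{(1+\sigma^2)\rho}{4\sigma^2}$, the claimed value.

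For part (2), the key observation is that $N=1$ forces $\bF=\mathbf{0}$ (a $1\times 1$ strictly lower-triangular matrix) and $\gamma_0=0$, so $SNR(1,\sigma^2,\rho)=\rho$ and the inner code degenerates to plain, feedback-free transmission; thus the $N=1$ term of the $\argsup$ equals $E_{NoFB}(R;\rho)=E_{NoFB}(R,P)$. It therefore suffices to show no $N\ge 2$ can beat this, which I would do by showing that above the stated threshold every such $N$ already exceeds the superchannel capacity: if $NR \ge C_{\mathrm{super}}(N)$ then $E_{NoFB}(NR;SNR(N))=0$ and that term vanishes. Using $SNR(2,\sigma^2,\rho)\le \frac{2(1+\sigma^2)(1-\gamma_0)\rho}{\sigma^2}$ (dropping the nonnegative $\beta_0^2$), the hypothesis $R > \frac{1}{2}\log_2\!\left(1+\frac{2(1+\sigma^2)(1-\gamma_0)\rho}{\sigma^2}\right)$ yields $R > C_{\mathrm{super}}(2)$, so in particular $2R \ge C_{\mathrm{super}}(2)$ and $N=2$ is ruled out.

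The main obstacle is promoting the $N=2$ argument to all $N\ge 2$, i.e.\ verifying $NR \ge C_{\mathrm{super}}(N)$ for every $N\ge 2$, equivalently that $\frac{1}{N}C_{\mathrm{super}}(N)$ is maximized over $N\ge 2$ at $N=2$. This is where I would invoke Lemma 3 once more: the numerator $\log_2(1+SNR(N))$ grows at most logarithmically in $N$ because $SNR(N)\le \frac{1+\sigma^2}{\sigma^2}N\rho$, whereas the rate penalty $NR$ grows linearly, so $\frac{1}{N}C_{\mathrm{super}}(N)$ is decreasing for $N\ge 2$ and the $N=2$ feasibility constraint is the binding one. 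Once that monotonicity is established, $R$ exceeding the threshold annihilates every $N\ge 2$ term of the $\argsup$, forcing $N^*=1$ and hence $E_{FB}(R,P)=E_{NoFB}(R,P)$.
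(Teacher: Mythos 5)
Your argument tracks the paper's proof closely on both parts, and in two places you are actually more careful than the paper is. For part (1) the paper only exhibits the lower bound $E_{FB}(0;P,\sigma^2) \geq \frac{1}{4N}SNR(N,\sigma^2,\rho)$ and then lets $N\to\infty$ (using $\gamma_0\to 0$ and $\beta_0^{2(N-1)}\to 0$) to read off the value $\frac{(1+\sigma^2)\rho}{4\sigma^2}$; your two-sided squeeze, capping every term of the supremum by $\frac{(1+\sigma^2)\rho}{4\sigma^2}$ via Lemma 3 and then matching that cap in the limit, is the clean way to turn the paper's ``$\geq$ plus limit'' into an equality. For part (2) the paper likewise argues only through the $N=2$ case: it computes $\beta_0$ in closed form for $N=2$, drops the $\beta_0^2$ term to obtain $SNR \leq \frac{2(1+\sigma^2)(1-\gamma_0)\rho}{\sigma^2}$, and concludes that above the stated threshold the $N=2$ inner code exceeds capacity; it never addresses $N\geq 3$. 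You correctly identify that as the missing step.

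However, your patch for $N\geq 3$ does not quite close at the stated threshold. Lemma 3 gives $\frac{1}{N}\log_2\left(1+SNR(N)\right) \leq \frac{1}{N}\log_2\left(1+\frac{(1+\sigma^2)N\rho}{\sigma^2}\right)$, and the right-hand side is indeed decreasing in $N$; but its value at $N=2$ is $\frac{1}{2}\log_2\left(1+\frac{2(1+\sigma^2)\rho}{\sigma^2}\right)$, which is strictly larger than the lemma's threshold $\frac{1}{2}\log_2\left(1+\frac{2(1+\sigma^2)(1-\gamma_0)\rho}{\sigma^2}\right)$ whenever $\gamma_0>0$. So a rate just above the stated threshold is only guaranteed to kill the $N=2$ term, where the sharper, $\gamma_0$-dependent bound on $SNR(2)$ applies; for $N\geq 3$ your monotonicity argument only rules out rates above the larger, $\gamma_0$-free threshold, and it also quietly substitutes the Lemma-3 majorant for the actual quantity $\frac{1}{N}\log_2(1+SNR(N))$ when asserting that the latter is decreasing. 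Closing that residual window requires either showing the actual per-use superchannel capacity is maximized over $N\geq 2$ at $N=2$, or weakening the threshold by removing the $(1-\gamma_0)$ factor. Since the paper silently skips $N\geq 3$ altogether, this is a defect you share with (and partially improve upon) the original; it is worth flagging rather than fatal to the statement.
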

\begin{proof}
\noindent At rates very close to zero, we can solve analytically for the error exponent of our scheme. It is a classic result that for $R = 0$, the following is true \cite{shannon2}:
\begin{equation}
E_{NoFB}\left(R=0;P\right) = \frac{P}{4}.
\end{equation}
This would imply (\ref{errexp}) can be written as
\begin{eqnarray}
E_{FB}\left(R=0;P,\sigma^2\right) &\geq& \frac{1}{4N}SNR(N,\sigma^2,\rho),\\
&=& \hspace{-3mm}\frac{1}{4N}\frac{(1+\sigma^2)N(1-\gamma_{0})\rho}{\sigma^2 + \beta_{0}^{2(N-1)}}.
\label{errexp2}
\end{eqnarray}
\noindent Note, however, that as $R \rightarrow 0$, we can let $N \rightarrow \infty$.  As stated earlier this implies $\gamma \rightarrow 0$ and $\beta^{2(N-1)} \rightarrow 0$ since $\beta \in (0,1)$.  This produces
\begin{equation}
E_{FB}\left(R=0;P,\sigma^2\right) = \frac{1+\sigma^2}{4\sigma^2}\rho > \frac{\rho}{4} = E_{NoFB}(R=0,P).
\end{equation}

For the second result, consider the specific case of $N = 2$.  Fortunately, when $N = 2$, we can solve analytically for $\beta$ using Lemma 2.  After some algebra, we find
\begin{equation}
\beta_{0} = \sqrt{\frac{(1+\sigma^2)\gamma\rho}{2}+1} - \sqrt{\frac{(1+\sigma^2)\gamma\rho}{2}}.
\end{equation}
\noindent Using this value of $\beta$, the received SNR is calculated to be
\begin{eqnarray}
SNR(\beta_{0},\gamma) &=& \frac{(1+\sigma^2)N(1-\gamma)\rho}{\sigma^2 + \beta_{0}^{2(N-1)}},\\
& = & \hspace{-2mm}\textstyle\frac{(1+\sigma^2)N(1-\gamma)\rho}{\sigma^2 + \left(\sqrt{\frac{(1+\sigma^2)\gamma\rho}{2}+1} - \sqrt{\frac{(1+\sigma^2)\gamma\rho}{2}}\right)^2},\\
&<& \frac{2(1+\sigma^2)(1-\gamma)\rho}{\sigma^2}.
\label{snrleq}
\end{eqnarray}
\noindent For a rate to be achievable, it must satisfy
\begin{equation}
NR \leq \log_{2}\left(1 + SNR(\beta_{0},\gamma_{0})\right),
\end{equation}
\noindent where $\gamma_{0}$ is the optimal $\gamma$ defined in Lemma 6.  Setting $N = 2$ and using (\ref{snrleq}), feedback should not be employed with our concatenated scheme if
\begin{equation}
R > \frac{1}{2}\log_{2}\left(1 + \frac{2(1+\sigma^2)(1-\gamma_{0})\rho}{\sigma^2}\right).
\end{equation}
\end{proof}

\section{Simulations for Concatenated Coding}\label{sec:sim2}
In this section, the performance of the concatenated coding system in Section VI is simulated for the cases where the outer code is an LDPC code (according to WiMAX standard) and a turbo code (according to UMTS standard).  Details for each code are given below.  These simulations were run using the Coded Modulation Library \cite{CML}.  To keep the number of channel uses consistent, the concatenated coding scheme has to implement $2^N$ modulation order versus the open-loop technique using BPSK.  Therefore, to use an inner code with 2 iterations of the proposed linear scheme, $2^2$ modulation order is used (i.e., QPSK).  This can be seen alternatively as splitting complex modulation transmission into two parallel real modulation transmissions.  Also, to ensure that both schemes use the same average power, the linear feedback scheme must be designed with a particular value of $\rho$.  In particular, if the open-loop technique uses $E_{s}$ energy per symbol to modulate, then the linear feedback code must be designed with $\rho = E_{s}$ when using $2^N$ modulation order.
\begin{figure}
  \centering
  \includegraphics[scale=0.55]{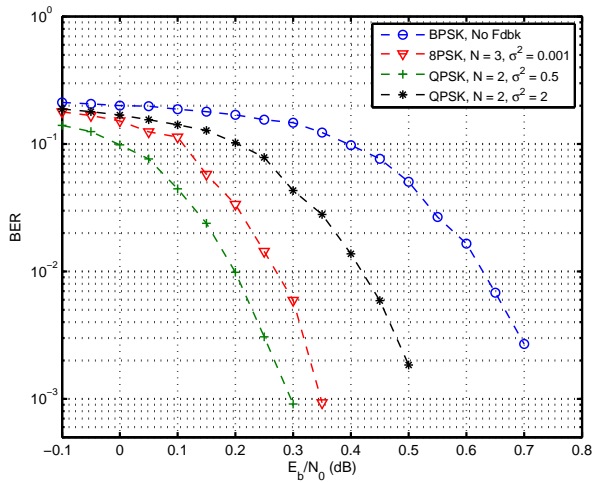}\\
  \caption{BER performance of concatenated coding scheme versus open-loop coding for UMTS turbo code ($K = 5114$ bits, Rate $= 1/3$, and 10 decoding iterations).}
\end{figure}
\begin{figure}
\centering
  \includegraphics[scale = 0.55]{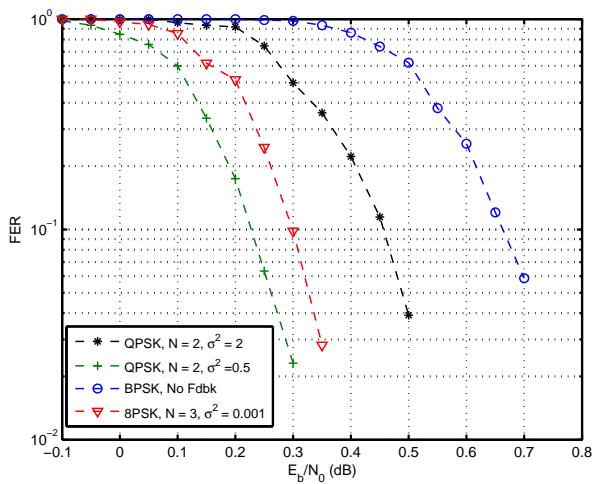}\\
  \caption{FER performance of concatenated coding scheme versus open-loop coding for UMTS turbo code ($K = 5114$ bits, Rate $= 1/3$, and 10 decoding iterations).}
\end{figure}
Fig. 13 shows that for a fixed $\frac{E{b}}{N_{0}}$, the probability of bit error is up to $0.4-0.5$ dB lower by using the concatenated coding scheme for the UMTS turbo code (Rate $1/3$, $K = 5114$ bits, 10 decoding iterations).  Note that this turbo code uses the max-log-MAP algorithm.  This gain in performance is also a function of the feedback noise variance.  As can be seen, the performance gains diminish as feedback noise increases.  The same phenomenon is apparent in Fig. 14 which is a comparison of the frame-error-rate (FER) for both techniques.  A similar improvement (up to $0.4-0.5$ dB) can be seen.

\begin{figure}
  \centering
  \includegraphics[scale=0.55]{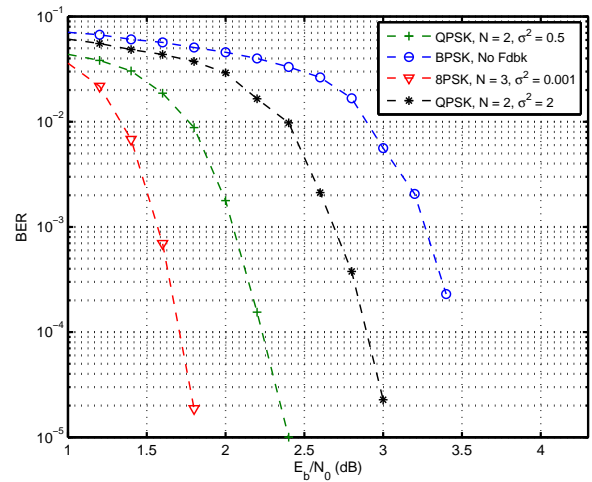}\\
  \caption{BER performance of concatenated coding scheme versus open-loop coding for WiMAX LDPC code ($K = 2304$ bits, Rate $= 5/6$, and 100 decoding iterations).}
\end{figure}
\begin{figure}
\centering
  \includegraphics[scale = 0.55]{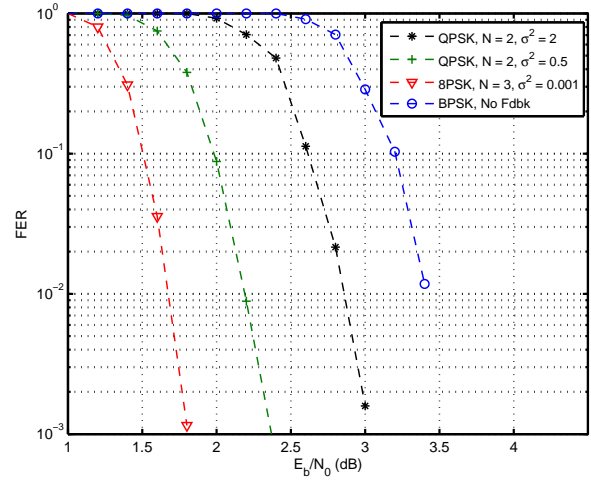}\\
  \caption{FER performance of concatenated coding scheme versus open-loop coding for WiMAX LDPC code ($K = 2304$ bits, Rate $= 5/6$, and 100 decoding iterations).}
\end{figure}

Fig. 15 display the the BER for both the open-loop and concatenated coding scheme using the LDPC code as given in the WiMAX standard (Rate $5/6$, $K = 2304$ bits, 100 decoding iterations).  Again, the concatenated code is modulated using QPSK and has an inner code of two iterations of the proposed linear feedback scheme.  The performance of the concatenated coding schemes again display lower error rates than pure open-loop techniques - displaying up to around 2 dB improvement.  The effect of feedback noise is clear as it greatly closes the gap between the two methods.  However, it is interesting to see that the $N = 3, \sigma^2 = 0.001$ performs much better when compared to the turbo code.  Fig. 16 displays the FER for both schemes which demonstrates up to around 2 dB improvement.

An interesting point introduced by extending an open-loop error-correcting code into a closed-loop concatenated code is the tradeoff between modulation order and the increase in received SNR for the channel.  If we increase the number of iterations of feedback coding, the received SNR increases.  However, simultaneously, the modulation order increases which creates a less forgiving probability of symbol error.  This tradeoff allows for a new degree of freedom in transmission schemes that can be exploited to achieve lower error rates.

\section{Conclusions}\label{sec:conc}
In this paper, we investigated a specific case of concatenated coding for the AWGN channel with noisy feedback.  The inner code was designed as a linear feedback scheme that was constructed to maximize received signal-to-noise ratio.  The performance of the linear feedback scheme was compared to another well-known feedback technique, the Schalkwijk-Kailath scheme.  The outer code was allowed to be any open-loop error correction code for ease of adaptation.  The concatenated coding scheme shows that the use of feedback can greatly increase error exponent bounds compared to pure open-loop techniques. Simulations illustrated the performance of the linear scheme and its incorporation into the concatenated coding scheme when the outer code is either a turbo code or LDPC code.

\begin{appendix}
\subsection{Schalkwijk-Kailath Coding Scheme}
The S-K scheme is a special case of the linear feedback encoding framework formulated in Section III.  When describing the S-K scheme we will ignore feedback noise $(\sigma^2 \rightarrow 0)$, since it was designed for a noiseless feedback channel.  In the S-K set-up, $\gamma = \frac{N-1}{N}$ and $\bg_{SK}$, $\bF_{SK}$, and $\bq_{SK}$ have the following definitions:

\begin{enumerate}
\item $\bg_{SK} = \left[1, 0, \ldots, 0\right]^{T}$,

\item Let $\alpha^{2} = 1 + \rho$ and $r = \sqrt{\rho}$.  Then $\bF_{SK}$ is an $N\times N$ encoding matrix given by
\[\hspace{-3mm}\bF_{SK} = \left[\begin{array}{ccccccc}
									0 & 0 & & & \cdots & & 0\\
									-r & 0\\
									\frac{-r}{\alpha} & \frac{-r^{2}}{\alpha} & 0\\
									\frac{-r}{\alpha^{2}} & \frac{-r^{2}}{\alpha^{2}} & \frac{-r^{2}}{\alpha} & 0 &&& \vdots\\
									\frac{-r}{\alpha^{3}} & \frac{-r^{2}}{\alpha^{3}} & \frac{-r^{2}}{\alpha^{2}} & \frac{-r^{2}}{\alpha} & 0\\
									\vdots & \vdots & \vdots & \vdots & \ddots & \ddots\\
									\frac{-r}{\alpha^{N-2}} & \frac{-r^{2}}{\alpha^{N-2}} & \frac{-r^{2}}{\alpha^{N-3}} && \cdots & \frac{-r^{2}}{\alpha} & 0\\
									\end{array}\right],
\]
\item
\[\bq_{SK} = \left[1,\frac{r}{\alpha^2},\frac{r}{\alpha^3},\ldots,\frac{r}{\alpha^{N}}\right]^T.
\]

\end{enumerate}
\subsection{Optimization of Received SNR}
For this optimization, let us assume that $\gamma$ is fixed and without loss of generality, $\bg$ and $\bq$ are both unit vectors.  With those assumptions, the goal at this point is to design $\bg, \bq,$ and $\bF$ to maximize (\ref{SNR}).  Looking first at the numerator, we see that we can bound $\left|\bq^T\bg\right|^2$ using the Cauchy-Schwarz inequality.  Doing this, we see that
\begin{eqnarray*}
\left|\bq^T\bg\right|^2 &\leq& \left\|\bq\right\|^2\left\|\bg\right\|^2\\
	& = & 1.
\end{eqnarray*}
\noindent This bound can be achieved by letting $\bg = \bq$.  For our purposes now, we will always assume that $\bg = \bq$, $\bF$ is restricted as in (\ref{Fcon}), and $E[\theta^2] = N(1-\gamma)\rho$.  With these conditions, the received SNR were are trying to optimize simplifies to
\begin{equation}
SNR = \frac{N(1-\gamma)\rho}{\left\|\bq^T(\bI + \bF)\right\|^2 + \sigma^2\left\|\bq^T\bF\right\|^2}.
\label{SNR2}
\end{equation}
\noindent Note also that in the S-K case, even though $\bq_{SK}$ is not a unit vector, still $\left|\bq_{SK}^T\bg_{SK}\right|^2 = 1$.

Since the numerator is now fixed, our focus now turns towards minimizing the denominator.  However, this is more complicated.  The ideal solution would be to jointly minimize the denominator over $\bq$ and $\bF$.  Unfortunately, this does not yield any feasible path towards a solution.  Instead of attempting to jointly optimize, we now derive the two conditional optimization methods used as Lemma 1 and Lemma 2.

First, consider minimizing the denominator given a combining vector $\bq$.  Since $\bq$ is given, the goal is to design $\bF$ to maximize (\ref{SNR}); therefore we should pick $\bF$ using
\begin{equation}
\begin{array}{ccc}
\hspace{-2mm}\bF_{opt} = \hspace{-4mm} &\underset{\bF}{\operatorname{argmin}}&\left\|\bq^T(\bI + \bF)\right\|^2 + \sigma^2\left\|\bq^T\bF\right\|^2,\\
&\textnormal{subject to} & \left\|\bF\right\|_{F}^2 \leq (1+\sigma^2)^{-1}N\gamma\rho,~f_{i,j} = 0,~i \leq j.
\vspace{2mm}
\end{array}
\label{Fopt}
\end{equation}

We now have sufficient background to prove Lemma 1.
\begin{proof}{(Lemma 1)}
To begin let us define the non-zero columns of $\bF$ as $\bff_{i} = \left[ \hspace{1mm} f_{i+1,i},f_{i+2,i},\ldots,f_{N,i}\right]^T$ for $1 \leq i \leq N-1$.  Now, working through the multiplication, we can rewrite
\begin{equation}
  \left\|\bq^T(\bI + \bF)\right\|^2 = \displaystyle\sum_{i=1}^{N-1}(q_{i} + \bq_{(i)}^{T}\bff_{i})^2 + q_{N}.
  \label{firstterm}
\end{equation}

At this point, it is worthwhile to remark that minimizing this sum is equivalent to minimizing the total sum given in (\ref{Fopt}).  This is due to the fact that the subspace for the solution of $\bF$ in the first term is that same as in the second term, $\sigma^2\|\bq^T\bF\|^2$.  This can be seen as both terms can be written in the form $\|\bF\bq + \bb\|^2$ where in the first term $\bb = \bq$ and in the second $\bb = 0$.  Both solutions can be carried out the same way with the assumption that $\bb \geq 0$, which is assumed.  For lack of redundancy, only the minimization of the first term is explicitly carried out.

Looking back, to minimize (\ref{firstterm}), we need to minimize $q_{i} + \bq_{(i)}^{T}\bff_{i}$ for all $i$.  This can be accomplished by designing the $\left\{\bff_{i}\right\}$ such that
\begin{equation}
\bff_{i} = - \frac{\bq_{(i)}}{\left\|\bq_{(i)}\right\|}\alpha_{i},
\label{f}
\end{equation}
\noindent where
\begin{equation}
\displaystyle\sum_{i=1}^{N-1}\alpha_{i}^{2} \leq (1+\sigma^2)^{-1}N\gamma\rho.
\label{const}
\end{equation}
\noindent The introduction of $\left\{\alpha_{i}\right\}$ is required because of the constraint, $\left\|\bF\right\|_{F}^{2} \leq (1+\sigma^2)^{-1}N\gamma\rho$.  Substituting in for the new columns of $\bF$ produces
\begin{equation}
\left\|\bq^T(\bI + \bF)\right\|^2 = \displaystyle\sum_{i=1}^{N-1}(q_{i} - \left\|\bq_{(i)}\right\|\alpha_{i})^2 + q_{N}.
\label{min}
\end{equation}

This limits the problem of designing the matrix $\bF$ to finding the $\left\{\alpha_{i}\right\}$ that minimize (\ref{min}) and satisfy (\ref{const}) - this is a norm-constrained least squares problem.  This is more evident if we let
	\[\bA = \left[\begin{array}{cccc}
									\left\|\bq_{(1)}\right\| & 0 & \cdots & 0\\
									0 & \left\|\bq_{(2)}\right\| & \cdots & \vdots\\
									0 & 0 & \ddots & 0\\
									0 & 0 & 0 & \left\|\bq_{(N-1)}\right\|\\
									0 & 0 & \cdots & 0\\
									\end{array}\right]
\] and $\bb = \left[\alpha_{1},\alpha_{2},\ldots,\alpha_{N-1}\right]^T$.  Thus, rewriting (\ref{min}), the problem of minimizing the $\left\|\bq^T(\bI + \bF)\right\|^2$ term now becomes
\[
\begin{array}{cc}
\min & \left\|\bA\bb - \bq\right\|^2.\\
\textrm{subject to} & \left\|\bb\right\|^2 \leq (1+\sigma^2)^{-1}N\gamma\rho
\end{array}
\]

\noindent Noting that $\bq^T(\bI + \bF) = (\bq - \bA\bb)^T$ and $\bq^T\bF = (-\bA\bb)^T$, we can calculate the optimal $\bb$ using
\begin{equation}
\begin{array}{ccc}
\bb_{opt} = &\underset{\bb}{\operatorname{argmin}}&\left\|\bA\bb-\bq\right\|^2 + \sigma^2\left\|\bA\bb\right\|^2,\\
& \textnormal{subject to} & \left\|\bb\right\|^2 \leq (1+\sigma^2)^{-1}N\gamma\rho.
\end{array}
\label{bopt1}
\end{equation}
\noindent At this point, the focus is now not only on the first term but taken over the whole sum in (\ref{Fopt}) as can be seen with the introduction of the $\sigma^2$ term in (\ref{bopt1}).  To solve for the optimal $\bb$ and make sure that $\left\|\bb\right\|^2 \leq (1+\sigma^2)^{-1}N\gamma\rho$, we use Lagrange multipliers.  Forming the Lagrangian, we get
\[
\begin{array}{c}
L(\bb,\lambda) = \bq^T\bq - 2\bb^T\bA^T\bq +\bb^T\bA^T\bA\bb +\hspace{1mm}\sigma^2\bb^T\bA^T\bA\bb\\
+~\lambda(\bb^T\bb - (1+\sigma^2)^{-1}N\gamma\rho).\\
\end{array}
\]
\noindent After taking the gradient with respect to $\bb$ and setting to zero, solving for the optimal $\bb$ results in
\begin{equation}
\bb_{opt} = ((1 + \sigma^2)\bA^T\bA + \lambda\bI)^{-1}\bA^T\bq,
\label{b}
\end{equation}
\noindent where $\lambda$ is chosen such that $\bb^T\bb = (1+\sigma^2)^{-1}N\gamma\rho$.  Once $\bb$ has been calculated, $\bF$ can be constructed using (\ref{f}).
\end{proof}

To prove Lemma 2, we consider the case when $\bF$ is given and we are designing $\bq$ to maximize the received SNR.  The goal now is to find $\bq$ such that
\[
\begin{array}{ccc}
\bq_{opt} = &\underset{\bq}{\operatorname{argmin}}&\left\|\bq^T(\bI + \bF)\right\|^2 + \sigma^2\left\|\bq^T\bF\right\|^2\\
&\textnormal{subject to}& \left\|\bq\right\|^2 = 1
\end{array}
\]
\noindent This problem, however, can be solved very quickly as given in the following proof of Lemma 2.

\begin{proof}{(Lemma 2)}
Let $\delta_1, \delta_2, \ldots, \delta_{N}$ be the eigenvalues of $(\bI + \bF)(\bI + \bF)^T + \sigma^2 \bF\bF^T$ such that $\delta_1 \geq \delta_2 \geq \ldots \geq \delta_{N} \geq 0$.  Then,
\[
\begin{array}{c}
\left\|\bq^T(\bI + \bF)\right\|^2 + \sigma^2\left\|\bq^T\bF\right\|^2 =\\
\vspace{-3mm}\\
\bq^T\left[(\bI + \bF)(\bI + \bF)^T + \sigma^2 \bF\bF^T\right]\bq \geq \delta_{N}.
\end{array}
\]
This bound can be achieved by letting $\bq$ be the eigenvector of $(\bI + \bF)(\bI + \bF)^T + \sigma^2 \bF\bF^T$ corresponding to $\delta_{N}$.  This choice of $\bq$ leads to $\left\|\bq^T(\bI + \bF)\right\|^2 + \sigma^2\left\|\bq^T\bF\right\|^2 = \delta_{N}$.

\end{proof}
These two conditional solutions allow for numerical optimization as discussed in Section IV.

\subsection{Proof of Lemma 5}
We now provide the proof for the structure of our linear scheme as given in Lemma 5.
\begin{proof}
To find the entries of $\bF$, let us consider entries $f_{N-1,N-2}$ and $f_{N,N-1}$ shown below:
\[
\bF = \left[
\begin{array}{ccccc}
0&&\cdots&&0\\
 f_{2,1}\\
\vdots&\ddots&\ddots& & \vdots\\
& & f_{N-1,N-2}\\
 f_{N,1}&\cdots& f_{N,N-2} & f_{N,N-1} & 0
\end{array}
\right]
\]
From the form in Conjecture 1, we should have that
\begin{equation}
f_{N-1,N-2} = f_{N,N-1}.
\label{equal}
\end{equation}
\noindent Now we use Lemma 1 to begin finding the form of $\bF$ given the exponential form of $\bq$.  Using step 3 of Lemma 1, we  compute $\bb$ as
\begin{equation}
\bb = \left[
\begin{array}{c}
\frac{\beta^0 \left\|\bq_{(1)}\right\|}{\lambda + (1+\sigma^2)\left\|\bq_{(1)}\right\|^2}\\
\\
\frac{\beta^1 \left\|\bq_{(2)}\right\|}{\lambda + (1+\sigma^2)\left\|\bq_{(2)}\right\|^2}\\
\vdots\\
\frac{\beta^{N-2} \left\|\bq_{(N-1)}\right\|}{\lambda + (1+\sigma^2)\left\|\bq_{(N-1)}\right\|^2}\\
\end{array}
\right].
\end{equation}

\noindent Now, using the definitions of the columns from step 4 of Lemma 1, we get
\begin{eqnarray}
f_{N-1,N-2} &=& \frac{-\beta^{N-2}\beta^{N-3}}{\lambda + (1+\sigma^2)\left\|\bq_{(N-2)}\right\|^2},\\
\label{F1}
f_{N,N-1} &=& \frac{-\beta^{N-1}\beta^{N-2}}{\lambda + (1+\sigma^2)\left\|\bq_{(N-1)}\right\|^2}.
\end{eqnarray}
\noindent Then, using (\ref{equal}), we solve for $\lambda$ which produces
\begin{equation}
\lambda = \frac{(1+\sigma^2)\left(\beta^2\left\|\bq_{(N-2)}\right\|^2 - \left\|\bq_{(N-1)}\right\|^2\right)}{1-\beta^2}.
\label{lambda}
\end{equation}

Since the form of $\bq$ consists of consecutive powers of $\beta$, we can state the following:
\[
\left\|\bq_{(N-2)}\right\|^2 - \left\|\bq_{(N-1)}\right\|^2  =  \displaystyle \sum_{i = N-2}^{N-1}\beta^{2i} - \sum_{i = N-1}^{N-1}\beta^{2i},
\]
\begin{equation}
 \phantom{abcdedegf} =  \beta^{2(N-2)}.
 \label{lamsim}
\end{equation}

\noindent Using the value of $\lambda$ from (\ref{lambda}) in $\bb$ and simplifying using (\ref{lamsim}) results in the $(N-2)^{th}$ component of $\bb$ being
\[
b_{N-2} = \frac{\left\|\bq_{(N-2)}\right\|\beta^{N-3}(1-\beta^2)}{(1+\sigma^2)\beta^{2(N-2)}}.
\]

\noindent Using $b_{N-2}$ to construct $\bff_{N-2}$, we find

\begin{eqnarray}
\textstyle
\bff_{N-2} &=&
\left[\begin{array}{c}
f_{N-1,N-2}\\
f_{N,N-2}
\end{array}\right],\\
&=& b_{N-2}\left(\frac{-1}{\left\|\bq_{(N-2)}\right\|}\right)
\left[\begin{array}{c}
\beta^{N-2}\\
\beta^{N-1}\\
\end{array}
\right],\\
&=& \left[
\begin{array}{c}
-\frac{1-\beta^2}{(1+\sigma^2)\beta}\\
-\frac{1-\beta^2}{(1+\sigma^2)}
\end{array}
\right].\\
\end{eqnarray}

%

\noindent Using this pattern we find that any non-zero column of $\bF$ can be written as

\[
\bff_{i} =
\left[
\begin{array}{c}
f_{i+1,i}\\
f_{i+2,i}\\
\vdots\\
f_{N,i}
\end{array}
\right] =
\left[
\begin{array}{c}
-\frac{1-\beta^2}{(1+\sigma^2)\beta}\\
-\frac{1-\beta^2}{(1+\sigma^2)}\\
\vdots\\
-\frac{\beta^{N-2-i}(1-\beta^2)}{(1+\sigma^2)}
\end{array}
\right],
\]

\noindent which completely defines the structure of $\bF$.

Utilizing this structure of $\bF$, the Frobenius norm of $\bF$ can be computed to be
\begin{equation}
\left\|\bF\right\|_{F}^{2} = \frac{1}{(1+\sigma^2)^2}\left[\beta^{2(N-1)} + \frac{N-1}{\beta^2} - N\right]
\label{fro}
\end{equation}

\noindent Using this result and the bound $\left\|\bF\right\|_{F}^2 \leq (1+\sigma^2)^{-1}N\gamma\rho$, we find that the $\beta$ that meets the bound is the smallest positive root of
\begin{equation}
\beta^{2N}-(N+(1+\sigma^2)N\gamma\rho)\beta^2 + (N-1).
\label{betaeq}
\end{equation}
\end{proof}
\end{appendix}




\IEEEbiographynophoto{Zachary Chance (S'08)}
received the B.S. in electrical engineering at Purdue University, West Lafayette, Indiana in August 2007.  His research interests include adaptive communication systems, general feedback systems, and information theoretic radar imaging.  Mr. Chance is an active reviewer for IEEE Transactions on Wireless Communications and EURASIP Journal on Wireless Communications.

Mr. Chance has received the Ross Fellowship from Purdue University along with the Frederic R. Muller scholarship, Mary Bryan scholarship, and the Schlumberger-Tellkamp-Power scholarship.

\IEEEbiographynophoto{David J. Love (S'98 - M'05 - SM'09)}
received the B.S. (with highest honors), M.S.E., and Ph.D. degrees in electrical engineering from the University of Texas at Austin in 2000, 2002, and 2004, respectively. Since August 2004, he has been with the School of Electrical and Computer Engineering, Purdue University, West Lafayette, IN, where he is now an Associate Professor. Dr. Love currently serves as an Associate Editor for the IEEE Transactions on Signal Processing and the IEEE Transactions on Communications. He has also served as a guest editor for special issues of the IEEE Journal on Selected Areas in Communications and the EURASIP Journal on Wireless Communications and Networking. His research interests are in the design and analysis of communication systems, MIMO array processing, and array processing for medical imaging.

Dr. Love has been inducted into Tau Beta Pi and Eta Kappa Nu. Along with co-authors, he was awarded the 2009 IEEE Transactions on Vehicular Technology Jack Neubauer Memorial Award for the best systems paper published in the IEEE Transactions on Vehicular Technology in that year.  He was the recipient of the Fall 2010 Purdue HKN Outstanding Teacher Award.  In 2003, Dr. Love was awarded the IEEE Vehicular Technology Society Daniel Noble Fellowship.

\end{document}